\newcommand{\nbt}{\varphi}
\newcommand{\pp}{\mathbb{P}}
\newcommand{\ind}{\mathbb{I}}
\newcommand{\tsp}{\mathcal{T}}
\newcommand{\utsp}{\mathcal{T}^*}
\newcommand{\pyule}{\mathbb{P}_{\text{YHK}}}
\newcommand{\puni}{\mathbb{P}_{\text{PDA}}}
\newcommand{\puy}{\mathbb{P}_{\text{uYHK}}} %\\unrooted
\newcommand{\puu}{\mathbb{P}_{\text{uPDA}}}%\\unrooted
\newcommand{\rev}[1]{#1}%\textcolor{red}{#1}}
\newcommand{\tw}{}%\textcolor{red}}
\newcommand{\envelope}{(\raisebox{-.5pt}{\scalebox{1.45}{\Letter}}\kern-1.7pt)}
 \journalname{Journal of Mathematical Biology}
\begin{document}

\title{Clades and clans: a comparison study of two evolutionary models
\thanks{SZ was supported in part by the New Zealand Marsden Fund,
CT  by the National Science Foundation contract DBI-1146722, and TW by the Singapore MOE grant R-146-000-134-112. 
}
%\thanks{Grants or other notes
%about the article that should go on the front page should be
%placed here. General acknowledgments should be placed at the end of the article.}
}
% \subtitle{Do you have a subtitle?\\ If so, write it here}

%\titlerunning{Short form of title}        % if too long for running head

\author{Sha Zhu         \and
        Cuong Than \and 
        Taoyang Wu
}

%\authorrunning{Short form of author list} % if too long for running head

\institute{Sha Zhu \at
Wellcome Trust Centre for Human Genetics, University of Oxford,  United Kingdom\\
              \email{sha.joe.zhu@gmail.com}           %  \\
%             \emph{Present address:} of F. Author  %  if needed
           \and
           Cuong Than \at
             Department of Computer Science, University of Tuebingen, Germany\\
             \email{thvcuong@gmail.com}
            \and
           Taoyang Wu \envelope \at
            School of Computing Sciences, University of East Anglia,  United Kingdom\\
            \email{taoyang.wu@gmail.com}
}

\date{Received: date / Accepted: date}
% The correct dates will be entered by the editor

\maketitle

\begin{abstract}
The Yule-Harding-Kingman (YHK) model and the proportional to distinguishable arrangements (PDA) model are two binary tree generating models that are widely used in evolutionary biology.
Understanding the distributions of clade sizes under these two models %have applications in understanding 
provides valuable insights into macro-evolutionary processes, and is important in hypothesis testing and Bayesian analyses in phylogenetics. 
Here we show that these distributions are log-convex, which implies that very large clades or very small clades are more likely to occur under these two models.  Moreover, we prove that there exists a critical value $\kappa(n)$ for each $n\geqslant 4$  such that 
for a given clade with size $k$, 
 the probability that this clade is contained  in a random tree with $n$ leaves  generated under the YHK model is higher than that under the PDA model if $1<k<\kappa(n)$, and lower if  $\kappa(n)<k<n$. 
Finally, we extend our results to binary unrooted trees, and obtain similar results for the distributions of clan sizes.
% probabilities under the above two null models.

\keywords{Phylogenetic trees  \and Null models \and Clade \and Clan \and Log-convexity}
% \PACS{PACS code1 \and PACS code2 \and more}
% \subclass{MSC code1 \and MSC code2 \and more}
\end{abstract}

\section{Introduction}
\label{intro}
Distributions of genealogical features such as shapes, subtrees, and clades are of interest in phylogenetic and  population genetics. By comparing biological data with these distributions, which can be derived from null models such as the Yule-Harding-Kingman (YHK) model and proportional to distinguishable arrangements (PDA) model, we can obtain insights into  macro-evolutionary processes underlying the data \citep{felsenstein04a,mooers97a,mooers02a,nordborg98a,nordborg01a}. For instance, phylogenetic tree statistics were used to study  variation in speciation and extinction rates (see, e.g.~\citet{agapow02a,mooers97a,rogers96a}). 

As a basic concept in phylogenetic studies and systematic classification of species, a clade, also known as a monophyletic group, is a subset of extant species containing all the descendants of a common ancestor. In this paper, we are interested in the distributions of clade size in a random tree generated under the null models. 
Such distributions have been utilized in hypothesis testing as to whether a set of extant \tw{taxa} forms a clade \citep{hudson02a,rosenberg07a}, and 
are relevant to the Bayesian approach to phylogenetic reconstruction \citep{PR05,PS06}. 

Two well-studied and  commonly used null models in evolutionary biology are the Yule-Harding model~\citep{yule25a, harding71a}  and the PDA model  (also known as the uniform model) \citep{Aldous2001}. Loosely speaking, under the PDA model all rooted binary trees are chosen with equal probabilities, while under the Yule-Harding model each tree is chosen with a probability proportion to the number of total orders that can be assigned to internal nodes of the tree so that the relative (partial) order is preserved~\citep[see, e.g.~][]{semple03a}. 
More precisely, the Yule-Harding model assumes a speciation process  with a constant pure-birth rate \citep{Blum2006,Pinelis2003}, \tw{ which generates the same probability distributions of tree topologies as  Kingman's coalescent process \citep{Kingman1982}.
% when the branch length is irrelevant, and 
Therefore, we will refer to it as the Yule-Harding-Kingman (YHK) model~\citep{aldous96a}. }
Both the YHK model and PDA model are used to generate prior probabilities of tree topologies in Bayesian phylogenetic analyses \citep{Li2000,Rannala1996}.

Comparison studies of various tree statistics between the YHK and  PDA models have been reported in the literature. For example,
% Mckenzie and Steel~
\citet{McKenzie2000} derive the asymptotic probability distributions of cherries in phylogenetic trees; \citet{Steel2012} discusses the root location in a random Yule or PDA tree; \citet{Blum2006} obtain formulas for the mean, variance, and covariance of the Sackin \citep{Sackin1972} and Colless \citep{Colless1982} indices, two popular indices used to measure the balance of phylogenetic trees.
%, and  %Bortolussi et al. \citet{Bortolussi2006} have implemented these formulas into the library package {\tt apTreeshape}  for {\tt R}.

Note that in Bayesian analyses, the output is often  clade support calculated from the consensus of the approximated posterior distribution of the topologies. However, the relationships between topological priors and clade priors are often not straightforward. For instance, it is observed that the uniform topological prior, which is induced by the PDA model, \tw{leads to}  non-uniform clade priors \citep{PR05}. Indeed, for $n>4$, neither the PDA model nor the YHK model  gives rise to \tw{a  uniform prior} on clades \citep{PS06}. \tw{As an attempt to further elucidate these relationships, in this paper we study the distributions of clade sizes in the PDA model, and then conduct a comparison study of these distributions with those in the YHK model. In addition, we conduct a similar study on clans, the counterpart of clades for unrooted trees.}

The remainder of the paper is organized as follows. Sections 2 and 3 contain necessary notation and background used in the paper and a brief review of the YHK and PDA models. We then present in Section 4 the results concerning clade probabilities under the two null models, 
and those related to clan probabilities in Section 5. Finally, we conclude in Section 6 with discussions and remarks.

\section{Preliminaries}
\label{sec:preliminaries}
In this section, we present some basic notation and background concerning phylogenetic trees and log-convexity that will be used in this paper.
From now on, $X$ will be used to denote the leaf set, and we assume that $X$ is a finite set of size $n=|X|\geqslant 3$ unless stated otherwise. 

\bigskip
\noindent
\subsection{Phylogenetic trees}

A {\em tree} is a connected acyclic graph. A vertex will be  referred to  as a {\em leaf} if its degree is one, and an {\em interior vertex} otherwise. \tw{An unrooted tree is {\it binary} if all interior vertices have degree three. A  {\em rooted} tree is a tree that has exactly one distinguished node designated as the {\em root}, which is usually denoted by $\rho$. A rooted tree is binary if the root has degree two and all other interior vertices have degree three. }

A {\em phylogenetic tree} on $X$ is a binary tree with leaves bijectively labeled by elements of $X$. 
The set of rooted and unrooted phylogenetic trees on $X$ are denoted by $\tsp_X$ and $\utsp_X$, respectively.  Two examples of phylogenetic trees on $X=\{1,\dots,7\}$,  one rooted and the other unrooted, are presented in Figure \ref{fig:trees}.

\begin{figure}
	\centering
	\begin{tabular}{cc}
		%(a) & (b)\\
		\includegraphics[scale=0.7]{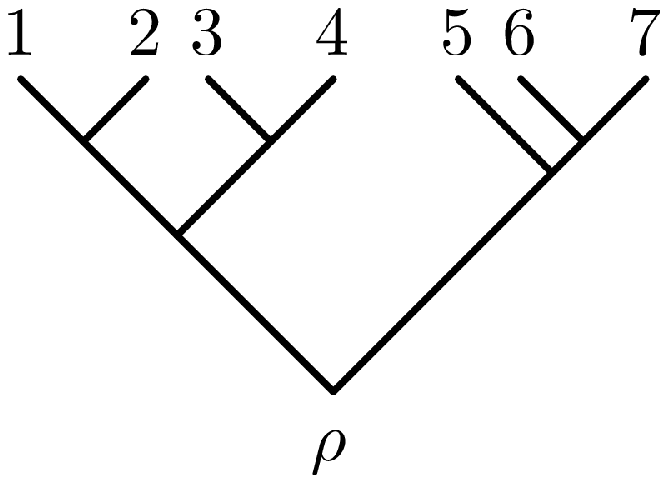} & \includegraphics[scale=0.7]{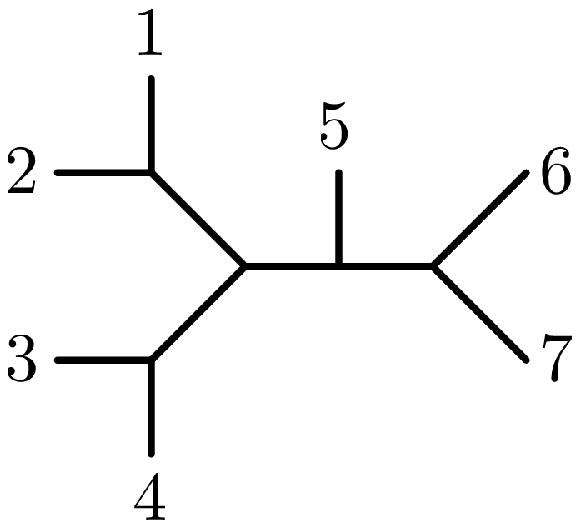}
	\end{tabular}
	\caption{Example of a rooted phylogenetic tree (left) and an unrooted phylogenetic tree (right).  
	}
	\label{fig:trees}
\end{figure}

Let $T$ be a rooted phylogenetic tree on $X$.
Given two vertices $v$ and $u$ in tree $T$,  $u$ is  {\em below} $v$ if $v$ is contained in the path between $u$ and the root of $T$. In this case, we also say $u$ is a {\em descendant} of $v$ if $v$ and $u$ are distinct. 
A {\em clade} of $T$ is a subset of $X$ that contains precisely all the leaves  below a vertex in $T$.   
A clade $A$ is called trivial if $|A|=1$ or $|A|=X$ holds, and non-trivial otherwise. 
Since $T$ has $2n-1$ vertices, it contains precisely $2n-1$ clades, including $n+1$ trivial ones. For example, the rooted phylogenetic tree on $X=\{1,\dots,7\}$ depicted in Figure~\ref{fig:trees} has 13 clades: the five non-trivial ones are $\{1,2\}, \{3,4\}, \{1,2,3,4\}, \{6,7\}$ and $\{5,6,7\}$.

Suppressing the root of a tree $T$ in $\tsp_X$, that is, removing $\rho$ and replacing the two edges incident with $\rho$ with an edge connecting the two vertices adjacent to $\rho$, results in an unrooted tree in $\utsp_X$, which will be denote by $\rho^{-1}(T)$.  For instance, for the rooted tree $T$ and unrooted tree $T^*$ in Figure \ref{fig:trees}, we have $T^*=\rho^{-1}(T)$. Note that for each $T^*$ in $\utsp_X$, there are precisely $2n-3$ rooted trees $T$ in $\tsp_X$ such that $T^*=\rho^{-1}(T)$ holds.

Recall that a {\em split} $A|B$ on $X$ is a bipartition of $X$ into two disjoint non-empty sets $A$ and $B$, that is, $A\cap B=\emptyset$ and $A\cup B=X$. Let $T^*$ be an unrooted tree in $\utsp_X$. Every edge $e$ of $T^*$ induces a necessarily unique split $A|B$ of $X$ obtained as the two sets of leaves separated by $e$. \tw{In other words, the path between a pair of leaves in $X$ contains $e$ if and only if one of these two leaves is in $A$ and the other one is in $B$.}  In this case, we say $A|B$ is a split contained in $T^*$.
A {\em clan} $A$ of $T^*$ is a subset of $X$ such that $A|(X\setminus A)$ is a split contained in $T^*$.  
 Since $T^*$ has $2n-3$ edges and each edge induces two distinct clans, it contains precisely $2(2n-3)$ clans.

\bigskip
\noindent
\subsection{Log-convexity} 
A sequence $\{y_1,\dots,y_m\}$ of real numbers is called {\em positive} if each 
number \tw{contained} in the sequence is \tw{greater than} zero. 
It is called {\em log-convex} if  
 $y_{k-1}y_{k+1}\geqslant y_k^2$ holds for $2\leqslant k \leqslant m-1$.  Clearly, a positive sequence $\{y_k\}_{1\leqslant k \leqslant m}$ is log-convex if and only if the sequence $\{y_{k+1}/y_k\}_{1\leqslant k \leqslant m-1}$ is increasing. Therefore, a log-convex sequence \tw{of positive numbers}  is necessarily {\em unimodal}, that is, there exists an index $1\leqslant k \leqslant m$ such that 
\begin{equation}
\label{def:unimodal}
y_1\geqslant y_2 \geqslant \dots \geqslant y_k~~~\mbox{and}~~~y_k \leqslant y_{k+1} \leqslant \cdots \leqslant y_m
\end{equation}
hold. 
Recall that a sequence $\{y_i\}_{1\leqslant i \leqslant m}$ is also called unimodal if 
$y_1\leqslant y_2 \leqslant \dots \leqslant y_k$ and $y_k \geqslant y_{k+1} \geqslant \cdots \geqslant y_m$ hold for some $1\leqslant k\leqslant m$. However, in this paper,  unimodal is always referred to the situation specified in Eq.~(\ref{def:unimodal}).

For later use, we end this section with the following results concerning log-convex sequences (see, e.g.~\citet{LW}). 

\begin{lemma}
\label{lem:log-convex}
If $\{y_i\}_{1\leqslant i \leqslant m}$ and $\{y'_i\}_{1\leqslant i \leqslant m}$ are two positive and log-convex sequences, then the sequences $\{y_i+y'_i\}_{1\leqslant i \leqslant m}$ and $\{y_i\cdot y'_i\}_{1\leqslant i \leqslant m}$ are positive and log-convex.
\hfill $\square$
\end{lemma}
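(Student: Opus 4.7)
The plan is to verify both claims directly from the definition of log-convexity, namely that $z_{k-1}z_{k+1} \geqslant z_k^2$ for each admissible index $k$. Positivity is immediate in both cases since a sum or product of positive numbers is positive, so the real content lies in the two log-convexity inequalities.

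For the product sequence $\{y_i y'_i\}$, I would simply multiply the two defining inequalities $y_{k-1}y_{k+1} \geqslant y_k^2$ and $y'_{k-1}y'_{k+1} \geqslant (y'_k)^2$, giving
\[
(y_{k-1}y'_{k-1})(y_{k+1}y'_{k+1}) \;\geqslant\; y_k^2 (y'_k)^2 \;=\; (y_k y'_k)^2,
\]
which is exactly the required inequality. This step is essentially automatic.

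The more interesting part is showing that $\{y_i + y'_i\}$ is log-convex. I would expand the target inequality
\[
(y_{k-1}+y'_{k-1})(y_{k+1}+y'_{k+1}) \;\geqslant\; (y_k+y'_k)^2
\]
and use the two hypothesized log-convexity bounds on the ``diagonal'' terms $y_{k-1}y_{k+1}$ and $y'_{k-1}y'_{k+1}$, reducing the problem to showing the cross-term inequality
\[
y_{k-1}y'_{k+1} + y'_{k-1}y_{k+1} \;\geqslant\; 2\, y_k y'_k.
\]
This is the one nontrivial step, and I would close it using the AM--GM inequality followed by the hypothesis: the left-hand side is at least $2\sqrt{(y_{k-1}y_{k+1})(y'_{k-1}y'_{k+1})}$, which in turn is at least $2\sqrt{y_k^2 (y'_k)^2} = 2 y_k y'_k$.

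The only real obstacle is the cross-term bound in the sum case; the product case and positivity are routine. Since the whole argument fits in a few lines, I would present it inline rather than breaking it into subcases, and simply note that both conclusions follow from applying the inequalities index-wise for $2 \leqslant k \leqslant m-1$.
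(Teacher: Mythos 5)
Your proof is correct. Note that the paper does not actually prove this lemma --- it states it with a reference to the literature and closes it immediately --- so there is no in-paper argument to compare against; your write-up supplies the standard proof that such references give. Both steps are sound: the product case follows by multiplying the two defining inequalities (valid since all quantities are positive), and in the sum case the only nontrivial point is exactly the cross-term bound $y_{k-1}y'_{k+1} + y'_{k-1}y_{k+1} \geqslant 2\,y_k y'_k$, which your AM--GM argument handles correctly, again using positivity to justify taking square roots.
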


\section{The PDA and YHK models}

In this section, we present a formal definition of the two null models investigated in this paper:  the {\it proportional to distinguishable arrangements} (PDA) model and {\it Yule--Harding--Kingman} (YHK) model.

To begin with, recall that the number of rooted phylogenetic trees with leaf set $X$ with $n=|X|$ is 
$$\varphi(n):= (2n-3)!! = 1\cdot 3 \dotsb (2n-3)=\frac{(2n-2)!}{2^{n-1}(n-1)!}.$$ 
%(see, e.g., \citet{semple03a}). 
Here we will use the convention that $\varphi(1)=1$.  Under the PDA model, each tree has the same probability to be generated, that is, we have 
\begin{equation} \label{eq:rooted-pda-prob}
	\puni(T) = \frac{1}{\varphi(n)}
\end{equation}
for every $T$ in $\tsp_X$.

Under the Yule--Harding model, %~\citep{harding71a, yule25a}, 
a rooted phylogenetic tree on $X$ is generated  as follows.  Beginning with a two leafed tree, we ``grow'' it by repeatedly splitting a leaf into two new leaves.  The splitting leaf is chosen randomly and uniformly among all the present leaves in the current tree.  After obtaining an unlabeled tree with $n$ leaves, we label each of its leaves with a label sampled randomly uniformly (without replacement) from $X$.  When branch lengths are ignored, the Yule--Harding model is shown by~\citet{aldous96a} to be equivalent to the trees generated by Kingman's coalescent process, % \citep{Kingman1982}, 
and so we call it the YHK model. Under this model, the probability of generating a tree $T$ in $\tsp_X$ is \citep{semple03a}:
\begin{equation} \label{eq:rooted-yule-prob}
	\pyule(T) = \frac{2^{n-1}}{n!}\prod_{v \in \mathring{V}(T)} \frac{1}{\lambda_v},
\end{equation}
where $\mathring{V}(T)$ is the set of interior nodes of $T$, 
and $\lambda_v$ is the number of interior nodes of $T$ that are below $v$.  
For example, the probability of the rooted tree in Figure~\ref{fig:trees} is
$
	{2^{7-1}}/{(7!\times 3\times 2\times 6)}.
$

\bigskip
\tw{For an unrooted tree $T^*$ in $\utsp_X$, let $\rho(T^*)$ denote the set of rooted trees $T$ in $\tsp_X$ with $T^*=\rho^{-1}(T)$.
As noted previously in Section~\ref{sec:preliminaries}, $T^*$ can be obtained from each of the $2n-3$ rooted trees $T$ in $\rho(T^*)$ by removing the root of $T$. Using this correspondence scheme, a probability measure $\pp$ on  $\tsp_X$ induces a probability \tw{measure} $\pp_u$ on the set $\utsp_X$. %, which will be denoted by $\pp_u$.  
That is, we have
\begin{equation} \label{eq:unrooted-yule-prob}
	\pp_u(T^*) = \sum_{T\in \rho(T^*)} \pp(T).
\end{equation}
} In particular, \tw{let $\puy$ and $\puu$ denote the probability measures on $\utsp_X$ induced by $\pyule$ and $\puni$, respectively.}
Note that this implies
\begin{equation} \label{eq:unrooted-pda-prob}
	\puu(T^*) = \frac{1}{\varphi(n-1)}
\end{equation}
for every $T^*$ in $\utsp_X$.  Since the number of unrooted phylogenetic trees on $X$ is $|\utsp_X| =\varphi(n-1)= (2n-5)!!$, each tree in $\utsp_X$ has the same probability under $\puu$.

\tw{We end this section with a property of the PDA and YHK models that will play an important role in obtaining our results. Recall that a probability measure $\pp$ on $\tsp_X$ has the {\it exchangeability property} if $\pp$ depends only on tree shapes, that is, 
if two rooted trees $T'$ and $T$ can be obtained from each other by permuting their leaves, then $\pp(T)=\pp(T')$ holds. 
Similarly, a probability measure on $\utsp_X$ has the {exchangeability property} if it depends only on tree shapes. 
It is well-known that  both $\pyule$ and $\puni$, the probability measures on the set of rooted trees $\tsp_X$ induced by the YHK and PDA models, have the exchangeability property~\citep{aldous96a},
%: the probability of a tree generated by the given model depends only on its shape, that is, ithen they have the same probability.   This property which  plays an important role in deriving results for clade probabilities under the YHK model in \citet{zhu11a}. 
By Eqs.~\eqref{eq:unrooted-pda-prob} and \eqref{eq:unrooted-yule-prob}, we can conclude that the probability measures $\puy$ and $\puu$  on the set of unrooted trees $\utsp_X$ also have the exchangeability property.  
}

\section{Clade probabilities}
\label{sec:clade}
In this section, we shall present our main results on clade probabilities. To this end, we need some further notation and definitions.  Given a rooted binary tree $T$, 
let 
\begin{equation}
	\ind_T(A) = \begin{cases} 1, &\text{if $A$ is a clade of $T$},\\ 0, &\text{otherwise,} \end{cases}
\end{equation}
be the `indicator' function that maps a subset $A$ of $X$ to 1 if $A$ is a clade of $T$, and 0 otherwise. Now for a subset $A$ of $X$,  the probability of $X$ being a clade of a random tree sampled according to a probability distribution $\pp$  on $\tsp_X$ is defined as
\begin{equation} \label{eq:clade-prob}
	\pp(A)= \sum_{T \in \tsp_X} \pp(T) \ind_T(A).
\end{equation}
Since $\sum_{A \subseteq X} \ind_T(A) = 2n-1$ for each $T\in \tsp_X$ and 
$\sum_{T \in \tsp_X} \pp(T) = 1$, we have 
\begin{equation*}
%\label{eq:sum}
	\sum_{A \subseteq X} \pp(A) = \sum_{A \subseteq X} \sum_{T \in \tsp_X} \pp(T) \ind_T(A) = \sum_{T \in \tsp_X} \pp(T) \sum_{A \subseteq X} \ind_T(A)=2n-1.
\end{equation*}
\tw{ By the last equation, we note that each probability measure $\pp$ on $\tsp_X$  induces a measure on the set of all subsets of $X$, which can be normalized to a probability measure by a factor of $1/(2n-1)$. 
%Since this factor does not For brevity, here we will use the term `clade probability' used here.
}

\bigskip
The above definitions on a subset of $X$ can be extended to a collection 
 of subsets of $X$. That is, given a collection of subsets  $\{A_1, \dotsc, A_k\}$ of $X$, we have 
\begin{equation}
	\ind_T(A_1, \dotsc, A_m) = \ind_T(A_1) \dotsb \ind_T(A_m),
\end{equation}
and
\begin{equation} \label{eq:mulclade-prob}
	\pp(A_1, \dotsc, A_m) =  \sum_{T \in \tsp_X} \pp(T) \big(\ind_T(A_1) \dotsb \ind_T(A_m)\big).
\end{equation}
Note that $\ind_T(A_1, \dotsc, A_m)=1$ if and only if each $A_i$ is a clade of $T$ for $1\leqslant i \leqslant m$. On the other hand, it is well known~(see, e.g.~\citet{semple03a}) that given a collection of subsets  $\{A_1, \dotsc, A_k\}$ of $X$, there exists a tree $T\in \tsp_X$ with $\ind_T(A_1, \dotsc, A_m)=1$  if and only if $\{A_1, \dotsc, A_k\}$ forms a {\em hierarchy}, that is, $A_i\cap A_j\in \{\emptyset, A_i,A_j\}$ holds for $1\leqslant i <j \leqslant m$.

\bigskip
\tw{The following result shows that if a probability measure depends only on tree shapes, then the clade probabilities derived from it are also independent of the `labeling' of the elements.}
\begin{lemma}
\label{lem:set:EP}
Let $\pp$ be a probability measure on $\tsp_X$ that has the \tw{exchangeability property}. Then for each pair of subsets $A$ and $A'$ of $X$ with $|A|=|A'|$, we have
\begin{equation}
\label{eq:set:ep}
\pp(A)=\pp(A')~~~\mbox{and}~~~~~~\pp(A,X\setminus A)=\pp(A',X\setminus A').
\end{equation}
\end{lemma}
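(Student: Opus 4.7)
The plan is to use a permutation of $X$ to transport $A$ to $A'$ and invoke the exchangeability property to match probabilities tree-by-tree. Since $|A|=|A'|$, there exists a bijection $\sigma: X\to X$ with $\sigma(A)=A'$, and consequently $\sigma(X\setminus A) = X\setminus A'$. This permutation induces a map $\Sigma: \tsp_X \to \tsp_X$ by relabeling each leaf $x$ of $T$ with $\sigma(x)$; since $\sigma$ is a bijection on $X$, $\Sigma$ is a bijection on $\tsp_X$.

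Next, I would observe that the trees $T$ and $\Sigma(T)$ have identical tree shape by construction, since they differ only in the labelling of their leaves. Exchangeability of $\pp$ therefore gives $\pp(T)=\pp(\Sigma(T))$ for every $T\in\tsp_X$. I would then verify the key compatibility between the relabeling and the clade indicator: for any subset $B\subseteq X$,
\begin{equation*}
\ind_T(B) = \ind_{\Sigma(T)}(\sigma(B)),
\end{equation*}
which holds because the set of leaves below an interior vertex $v$ of $\Sigma(T)$ is exactly the $\sigma$-image of the set of leaves below the corresponding vertex of $T$. Applying this with $B=A$ yields $\ind_T(A)=\ind_{\Sigma(T)}(A')$.

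Combining these observations, and using the substitution $T' = \Sigma(T)$ (legitimate because $\Sigma$ is a bijection on $\tsp_X$),
\begin{equation*}
\pp(A)=\sum_{T\in \tsp_X} \pp(T)\,\ind_T(A)
= \sum_{T\in \tsp_X} \pp(\Sigma(T))\,\ind_{\Sigma(T)}(A')
= \sum_{T'\in \tsp_X} \pp(T')\,\ind_{T'}(A') = \pp(A').
\end{equation*}
For the second identity in Eq.~\eqref{eq:set:ep}, I would apply the same relabeling argument to the product $\ind_T(A)\ind_T(X\setminus A)$ appearing in Eq.~\eqref{eq:mulclade-prob}; since $\sigma(A)=A'$ and $\sigma(X\setminus A)=X\setminus A'$, the same change of variables gives $\pp(A, X\setminus A)=\pp(A', X\setminus A')$.

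I do not anticipate a substantial obstacle here: the only point requiring care is the clade-preservation identity $\ind_T(B)=\ind_{\Sigma(T)}(\sigma(B))$, which follows directly from the definition of a clade as the leaf-set below an interior vertex and the fact that $\Sigma$ permutes leaf labels without altering the underlying graph structure.
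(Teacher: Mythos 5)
Your argument is correct and is essentially identical to the paper's own proof: both construct a permutation carrying $A$ to $A'$, relabel trees accordingly, use exchangeability to equate $\pp(T)$ with the probability of the relabeled tree, and conclude by reindexing the sum in Eq.~(\ref{eq:clade-prob}) (and likewise for the pair $A$, $X\setminus A$). Your explicit statement of the compatibility $\ind_T(B)=\ind_{\Sigma(T)}(\sigma(B))$ is just a slightly more detailed rendering of the paper's observation that $A$ is a clade of $T$ if and only if $A^{\pi}$ is a clade of $T^{\pi}$.
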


\begin{proof}
Suppose that $A$ and $A'$ are two subsets of $X$ that have the same size. Then there exists a permutation $\pi$ on $X$ such that $A'=A^{\pi}:=\{\pi(x)\mid x\in A\}$. Now for each tree $T$ in $\tsp_X$, let $T^\pi$ be the tree obtained from $T$ by relabeling the leaves of $T$ according to permutation $\pi$. Then $A$ is a clade of $T$ if and only if $A^\pi$ is a clade of $T^\pi$. Together with Eq.~(\ref{eq:clade-prob}), we have
\begin{eqnarray*}
\pp(A)&=&\sum_{T\in \tsp_X} \pp(T) \ind_T(A)
=\sum_{T\in \tsp_X} \pp(T) \ind_{T^\pi}(A^\pi)\\
&=&\sum_{T\in \tsp_X} \pp(T^\pi) \ind_{T^\pi}(A^\pi)
=\sum_{T^\pi\in \tsp_X} \pp(T^\pi) \ind_{T^\pi}(A^\pi)=\pp(A^\pi),
\end{eqnarray*}
where the third equality follows from the \tw{exchangeability property} of $\pp$. This shows  
$\pp(A)=\pp(A')$, and a similar argument leads to $\pp(A,X\setminus A)=\pp(A',X\setminus A')$.
\hfill $\square$
\end{proof}

Since $\pyule$ has the \tw{exchangeability property}, by Lemma~\ref{lem:set:EP} we know that $\pyule(A)$ is determined by the size of $A$ only. Therefore,  we denote
\[
	p_n(a) = \pyule(A),
\]
as the probability that a random tree in $\tsp_X$, where $n = |X|$, induces a specific clade $A$ of size $a$ under the YHK model. Similarly, we let 
\[
	q_n(a) = \puni(A),
\]
be the probability that a random tree in $\tsp_X$ induces a specific clade $A$ of size $a$ under the PDA model. 
 In addition, we also denote 
\[
	p_n(a, n-a) = \pyule(A, X\setminus A), \quad \text{and} \quad q_n(a, n-a) = \puni(A, X\setminus A),
\]
the probabilities that both $A$ and $X\setminus A$ are clades of a tree in $\tsp_X$ generated under the YHK and PDA models, respectively.  Note that if both $A$ and $X \setminus A$ are clades of a tree $T$, then they are precisely the clades consisting of the leaves below the two children of the root of $T$.  

\begin{corollary}
\label{cor:set:EP}
Let $\pp$ be a probability measure on $\tsp_X$ that has the \tw{exchangeability property}. For each $1\leqslant a \leqslant n$,  the expected number of clades with size $a$ contained in a random tree sampled according to $\pp$ is 
$${n\choose a} \pp(A),$$
where $A$ is an arbitrary subset of $X$ with $|A|=a$.
\end{corollary}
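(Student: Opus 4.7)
The plan is to express the number of size-$a$ clades in a random tree as a sum of indicator random variables, and then apply linearity of expectation together with Lemma~\ref{lem:set:EP}. Concretely, for any $T\in \tsp_X$, let $N_a(T) = \sum_{B\subseteq X,\, |B|=a} \ind_T(B)$, so $N_a(T)$ counts precisely the clades of $T$ with cardinality $a$. This is just a rewriting, since for each cardinality-$a$ subset $B$ of $X$, the indicator $\ind_T(B)$ contributes $1$ exactly when $B$ is one of the size-$a$ clades of $T$.

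Next, I would take the expectation of $N_a$ with respect to $\pp$. By definition of $\pp(B)$ in Eq.~(\ref{eq:clade-prob}) and linearity,
\begin{equation*}
\ee_\pp[N_a] \;=\; \sum_{T\in\tsp_X} \pp(T)\sum_{\substack{B\subseteq X\\ |B|=a}} \ind_T(B) \;=\; \sum_{\substack{B\subseteq X\\ |B|=a}} \sum_{T\in\tsp_X} \pp(T)\ind_T(B) \;=\; \sum_{\substack{B\subseteq X\\ |B|=a}} \pp(B).
\end{equation*}
Then I would invoke Lemma~\ref{lem:set:EP}: since $\pp$ has the exchangeability property, $\pp(B)$ depends only on $|B|$, so $\pp(B)=\pp(A)$ for every $B\subseteq X$ with $|B|=a$. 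Since the number of such subsets is $\binom{n}{a}$, the displayed sum collapses to $\binom{n}{a}\pp(A)$, which is the claimed expected value.

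There is no genuine obstacle here: the argument is a direct combination of linearity of expectation with the label-symmetry provided by Lemma~\ref{lem:set:EP}. The only point worth noting carefully is that the answer is \emph{independent} of the choice of the reference set $A$ of size $a$, which is exactly what Lemma~\ref{lem:set:EP} guarantees, so the statement is well-posed.
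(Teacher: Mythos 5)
Your proposal is correct and follows essentially the same route as the paper's own proof: both write the count of size-$a$ clades as a sum of indicators over all $a$-subsets, swap the order of summation (linearity of expectation), and collapse the resulting sum using Lemma~\ref{lem:set:EP}. No gaps; nothing further to add.
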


\begin{proof}
\tw{Denote the collection of subsets of $X$ with size $a$ by $\mathcal{X}_a$ and fix a subset $A\in \mathcal{X}_a$.
Let $Z_T(a):= \sum_{Y\in \mathcal{X}_a} \ind_T(Y)$ be the number of clades with size $a$ contained in a tree $T$. 
  Then the expected number of clades with size $a$ contained in a random tree sampled according to $\pp$ is given by
\begin{eqnarray*}
\sum_{T\in\tsp_X} \pp(T)Z_T(a)=\sum_{T\in \tsp_X}\sum_{Y\in \mathcal{X}_a} \pp(T)\ind_T(Y)=\sum_{Y\in \mathcal{X}_a}\sum_{T\in \tsp_X}\pp(T)\ind_T(Y) 
= \sum_{Y\in \mathcal{X}_a} \pp(Y)={n\choose a} \pp(A),
\end{eqnarray*}
where the last equality holds because by Lemma~\ref{lem:set:EP} we have $\pp(Y)=\pp(A)$ for all $Y\in \mathcal{X}_a$.
\hfill $\square$}
\end{proof}

\subsection{Clade probabilities under the YHK model}
\label{subsec:yule-clade}
In this subsection we study the clade probabilities under the YHK model. First, we have the following theorem concerning the computation of $p_n(a)$ and $p_n(a,n-a)$, which was discovered and rediscovered several times in the literature (see, e.g.,~\cite{blum05a,brown94a, heard92a,rosenberg03a,rosenberg06a}).
\begin{theorem} \label{thm:yule-clade}
For a positive integer $a \leqslant n-1$ we have:
\begin{enumerate}	
	\item[{\rm (i)}] $p_n(a) = \frac{2n}{a(a+1)}\binom{n}{a}^{-1}$.
	\item[{\rm (ii)}] $p_n(a,n-a) = \frac{2}{n-1}\binom{n}{a}^{-1}$.
\end{enumerate}
\end{theorem}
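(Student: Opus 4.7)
My plan is to handle parts~(ii) and~(i) by different methods: a direct calculation from the YHK probability formula~(\ref{eq:rooted-yule-prob}) for part~(ii), and an induction on $n$ via an expected-clade-count recursion for part~(i).

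For part~(ii), I would first observe that if both $A$ and $X\setminus A$ are clades of a tree $T\in\tsp_X$, then they must be the leaf sets of the two subtrees of the root of $T$, so $T$ decomposes uniquely into its root together with trees $T_A\in\tsp_A$ and $T_B\in\tsp_{X\setminus A}$. The interior vertex set partitions as $\{\rho\}\sqcup\mathring{V}(T_A)\sqcup\mathring{V}(T_B)$, with $\lambda_\rho(T)=n-1$ and $\lambda_v(T)=\lambda_v(T_A)$ for $v\in\mathring{V}(T_A)$ (and analogously for $T_B$). Plugging these into~(\ref{eq:rooted-yule-prob}) and collecting the factors of $2$ and the factorials coming from the three copies of the prefactor $2^{k-1}/k!$, a short calculation gives the clean factorization
\begin{equation*}
\pyule(T)=\frac{2}{(n-1)\binom{n}{a}}\,\pyule(T_A)\,\pyule(T_B).
\end{equation*}
Summing over $(T_A,T_B)\in\tsp_A\times\tsp_{X\setminus A}$ and using that each of the two marginal distributions sums to~$1$ gives part~(ii).

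For part~(i), I would use Corollary~\ref{cor:set:EP} to reformulate the claim: letting $E_n(a):=\binom{n}{a}\,p_n(a)$ denote the expected number of clades of size $a$ in a random YHK tree on $n$ leaves, the target formula becomes $E_n(a)=\frac{2n}{a(a+1)}$. I would prove this by induction on $n$ using the pure-birth Yule construction that is equivalent to YHK: generate $T_n$ from $T_{n-1}$ by picking a uniformly random leaf $x$ and replacing it with a cherry on two new leaves $x_1,x_2$. Under this construction, every clade of $T_{n-1}$ disjoint from $\{x\}$ remains a clade of $T_n$ of the same size; every clade of $T_{n-1}$ containing $x$ has its size grown by one in $T_n$ (in particular, the trivial leaf-clade $\{x\}$ is promoted to the new cherry $\{x_1,x_2\}$); and the two new leaves themselves contribute two new size-$1$ clades. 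Averaging over the uniform choice of $x$, with a size-$a$ clade of $T_{n-1}$ avoiding $x$ with probability $(n-1-a)/(n-1)$, yields the recursion
\begin{equation*}
E_n(a)=\frac{n-1-a}{n-1}\,E_{n-1}(a)+\frac{a-1}{n-1}\,E_{n-1}(a-1)\qquad\text{for } a\geq 2,
\end{equation*}
with base $E_n(1)=n$. Substituting the conjectured formula into the right-hand side, the terms simplify via the identity $2(n-1-a)+2(a+1)=2n$, completing the inductive step.

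The main subtlety I anticipate is the boundary case $a=2$ in the recursion: here the second term $\frac{1}{n-1}\cdot E_{n-1}(1)=1$ accounts precisely for the newly-formed cherry $\{x_1,x_2\}$, viewed as the ``promotion'' of the leaf-clade $\{x\}$ of $T_{n-1}$. This interpretation keeps the recursion uniform across all $a\geq 2$ without requiring a separate indicator term. Beyond this, the base of the induction (e.g., $n=a+1$, where $E_{a+1}(a)=2/a$ follows from the fact that under YHK the ordered size of one root subtree is uniform on $\{1,\dots,a\}$) is straightforward.
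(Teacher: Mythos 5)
The paper does not actually prove Theorem~\ref{thm:yule-clade}: it states the result and cites the literature (Heard, Brown, Rosenberg, Blum--Fran\c{c}ois), so there is no in-paper argument to compare against. Judged on its own, your proof is correct and self-contained. Part~(ii) is exactly the natural argument: the factorization $\pyule(T)=\frac{2}{(n-1)\binom{n}{a}}\pyule(T_A)\pyule(T_B)$ follows from Eq.~(\ref{eq:rooted-yule-prob}) because $\lambda_\rho=n-1$ and the $\lambda_v$ of the two root subtrees are inherited unchanged (I checked the constant against the worked example in the paper, $1/2835$, and it matches), and summing over $\tsp_A\times\tsp_{X\setminus A}$ uses only that the two marginal YHK measures are probability distributions. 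Part~(i) via the growth recursion for $E_n(a)=\binom{n}{a}p_n(a)$ is essentially the Heard/Rosenberg-style derivation; the bookkeeping of which clades survive, grow, or are created under a leaf split is right, the cherry-promotion case $a=2$ is handled correctly, and the algebra $2(n-1-a)+2(a+1)=2n$ closes the induction. Two small points you should make explicit. First, when $a=n-1$ the recursion calls on $E_{n-1}(n-1)$, which equals $1$ and does \emph{not} satisfy the conjectured formula (the theorem is only claimed for $a\leqslant n-1$); this is harmless only because its coefficient $(n-1-a)/(n-1)$ vanishes there, so say so rather than "substituting the conjectured formula" uniformly. Second, the growth process produces an unlabeled shape that is labeled only at the end, so you should note that the number of size-$a$ clades is a shape statistic, which is why the recursion computes the right expectation; combining with Corollary~\ref{cor:set:EP} and Lemma~\ref{lem:set:EP} then converts $E_n(a)$ into $p_n(a)$. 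With those two sentences added, the proof is complete.
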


By the above results, we show below that clade probabilities under the YHK model form a log-convex sequence. This implies that the clades with small or large size are more likely to be generated than those with middle size under the model.

\begin{theorem}
\label{thm:yhk:convex}
For $n\geqslant 3$, the sequence $\{p_n(a)\}_{1\leqslant a \leqslant n}$ and
 $\{p_n(a,n-a)\}_{1\leqslant a < n}$ are log-convex. Moreover, let 
 \[
		\Delta(n):=\sqrt{n+\Big(\frac{n-3}{4}\Big)^2 }+\frac{n-3}{4};
	\]
then we have
\begin{enumerate}
	\item[{\rm (i)}] $p_n(a)\geqslant p_n(a+1)$ for $a\leqslant \Delta(n)$, and $p_n(a) < p_n(a+1)$ for $a > \Delta(n)$, and	
	\item[{\rm (ii)}] $p_n(a,n-a)> p_n(a+1,n-a-1)$ for $a \leqslant n/2$ and $p_n(a,n-a)< p_n(a+1,n-a-1)$ for $a \geqslant n/2$.
\end{enumerate}
\end{theorem}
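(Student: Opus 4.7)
The plan is to build on the closed-form expressions from Theorem~\ref{thm:yule-clade} and decompose each sequence into elementary log-convex factors, then invoke Lemma~\ref{lem:log-convex}.

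Part (ii) is the easier case. From the formula $p_n(a,n-a) = \tfrac{2}{n-1}\binom{n}{a}^{-1}$, log-convexity of $\{p_n(a,n-a)\}$ reduces to that of $\{\binom{n}{a}^{-1}\}$, which is equivalent to the classical log-concavity $\binom{n}{a}^2 \geqslant \binom{n}{a-1}\binom{n}{a+1}$ of the binomial coefficients. For the monotonicity claim, computing the ratio
\[
\frac{p_n(a+1,n-a-1)}{p_n(a,n-a)} = \frac{\binom{n}{a}}{\binom{n}{a+1}} = \frac{a+1}{n-a}
\]
shows that this ratio crosses the value $1$ at $a=(n-1)/2$, yielding the two regimes stated.

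For part (i), I would factor $p_n(a) = 2n \cdot u_n(a) \cdot v_n(a)$ with $u_n(a) = 1/[a(a+1)]$ and $v_n(a) = \binom{n}{a}^{-1}$. The sequence $\{v_n(a)\}$ is log-convex by the argument in part (ii). For $\{u_n(a)\}$, the log-convexity condition $u_n(a-1)u_n(a+1) \geqslant u_n(a)^2$ simplifies, after cancellation, to $a(a+1) \geqslant (a-1)(a+2)$, which is just $2 \geqslant 0$. Applying Lemma~\ref{lem:log-convex} then yields log-convexity of $\{p_n(a)\}$.

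The monotonicity in (i) is the one step requiring a little care. I would compute
\[
\frac{p_n(a+1)}{p_n(a)} = \frac{a(a+1)}{(a+2)(n-a)},
\]
so that $p_n(a) \leqslant p_n(a+1)$ is equivalent to $2a^2 + (3-n)a - 2n \geqslant 0$. The final step is to recognise $\Delta(n)$ as the unique positive root of $2x^2 + (3-n)x - 2n = 0$: squaring the identity $\Delta(n)-(n-3)/4 = \sqrt{n+((n-3)/4)^2}$ gives $\Delta(n)^2 - \tfrac{n-3}{2}\Delta(n) = n$, which after multiplication by $2$ is exactly this quadratic. The dichotomy $a \leqslant \Delta(n)$ versus $a > \Delta(n)$ then gives the stated cases. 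This root-matching is really the only non-mechanical computation in the proof; everything else is assembly from the closed forms, a trivial algebraic inequality, and the well-known log-concavity of binomial coefficients.
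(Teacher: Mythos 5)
Your proposal is correct and follows essentially the same route as the paper: read off the closed forms from Theorem~\ref{thm:yule-clade}, factor each sequence into elementary log-convex pieces and apply Lemma~\ref{lem:log-convex}, then locate the sign change of the ratio $p_n(a+1)/p_n(a)=\frac{a(a+1)}{(a+2)(n-a)}$ by identifying $\Delta(n)$ as the positive root of $2x^2-(n-3)x-2n$. One small point to patch: the formula $p_n(a)=\frac{2n}{a(a+1)}\binom{n}{a}^{-1}$ is only valid for $a\leqslant n-1$, whereas the sequence in the theorem runs up to $a=n$ with $p_n(n)=1\neq \frac{2}{n+1}$; the paper handles this by setting the last term of the auxiliary sequence to $1$ and checking the final triple separately (it still works, since replacing $\frac{2}{n+1}$ by the larger value $1$ in the last position only helps the inequality $p_n(n-2)\,p_n(n)\geqslant p_n(n-1)^2$), and your argument should say a word about this endpoint. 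Your ratio computation in part (ii) actually pins the crossing at $a=(n-1)/2$, which is slightly sharper than the paper's own justification.
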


\begin{proof}
Let $y_a=\frac{2n}{a(a+1)}$ for $1\leqslant a \leqslant n-1$ and $y_n=1$, and $y'_a={n \choose a}^{-1}$ for $1\leqslant a \leqslant n$. 
Since $\{y_a\}_{1\leqslant a \leqslant n}$ and  $\{y'_a\}_{1\leqslant a \leqslant n}$ are both log-convex, by Lemma~\ref{lem:log-convex} and Theorem~\ref{thm:yule-clade} we can conclude that the sequence $\{p_n(a)\}_{1\leqslant a \leqslant n}$ is log-convex. 
A similar argument shows that  $\{p_n(a,n-a)\}_{1\leqslant a < n}$ is also log-convex.

By Theorem~\ref{thm:yule-clade}, we have 
\[
\frac{p_n(a+1)}{p_n(a)} = \frac{a(a+1)\binom{n}{a}}{(a+1)(a+2)\binom{n}{a+1}}=\frac{a(a+1)}{(a+2)(n-a)},
\]
for $1 \leqslant a \leqslant n-2$.  The last equation is less than or equal to $1$ if and only if 
$$a(a+1) \leqslant (a+2)(n-a) \iff 2a^2 - (n-3)a - 2n \leqslant 0 .$$
% \begin{align*}
% 	a(a+1) \leqslant (a+2)(n-a) &\iff 2a^2 - (n-3)a - 2n \leqslant 0\\
% 		&\iff \left(a - \frac{n-3}{4}\right)^2 \leqslant \left( \Delta(n) - \frac{n-3}{4} \right)^2. 
% \end{align*}
Therefore, $p_n(a+1) \leqslant p_n(a)$ if and only if $a \leqslant \Delta(n)$. This establishes Part (i) of the theorem.

Part (ii) of the theorem follows from the fact that $\binom{n}{a} < \binom{n}{a+1}$ for $a \leqslant n/2 $ and $\binom{n}{a} > \binom{n}{a+1}$ for $a \geqslant  n/2$.
\hfill $\square$
\end{proof}

\subsection{Clade probabilities under the PDA model}
\label{subsec:pda-clade}
Parallel to those in the Section~\ref{subsec:yule-clade}, in this subsection we derive results on clade probabilities under the PDA model. 
\begin{theorem} \label{thm:pda-clade}
For a positive integer $a \leqslant n-1$ we have:
\begin{enumerate}
\item[{\rm (i)}] $q_n(a) = \frac{\varphi(a)\varphi(n-a+1)}{\varphi(n)} = \binom{n-1}{a-1} \binom{2n-2}{2a-2}^{-1}$.
\item[{\rm (ii)}] $q_n(a,n-a)=\frac{\varphi(a)\varphi(n-a)}{\varphi(n)}=\frac{1}{(2n-2a-1)}\binom{n-1}{a-1}\binom{2n-2}{2a-2}^{-1}.$
\end{enumerate}
\end{theorem}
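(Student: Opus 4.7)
The plan is to derive both formulas directly from the definition of the PDA model, namely that $\puni(T) = 1/\varphi(n)$ for every $T \in \tsp_X$. By Lemma~\ref{lem:set:EP} and the exchangeability of $\puni$, the quantities $q_n(a) = \puni(A)$ and $q_n(a,n-a) = \puni(A, X\setminus A)$ depend only on $a$, so we may fix any particular subset $A \subseteq X$ with $|A|=a$ and count the trees in $\tsp_X$ containing $A$ (respectively, both $A$ and $X\setminus A$) as clades. Each count will then be divided by $\varphi(n)$ to obtain the probability.

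For part~(i), I would set up a bijection between the trees $T \in \tsp_X$ having $A$ as a clade and the product set $\tsp_A \times \tsp_{(X\setminus A)\cup\{*\}}$, where $*$ is a fresh placeholder leaf. Given $T$ with $A$ as a clade, let $v$ be the vertex below which precisely the leaves of $A$ sit; prune the subtree rooted at $v$ to obtain a rooted phylogenetic tree on $A$, and replace $v$ in what remains by a leaf labelled $*$, giving a rooted phylogenetic tree on $(X\setminus A)\cup\{*\}$. The inverse operation is clear: take any pair of trees from these two sets, and reattach the $A$-subtree where $*$ sits. Hence the number of such $T$ is $\varphi(a)\varphi(n-a+1)$, yielding $q_n(a) = \varphi(a)\varphi(n-a+1)/\varphi(n)$. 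The second expression then follows by substituting $\varphi(k) = (2k-2)!/(2^{k-1}(k-1)!)$ and simplifying the powers of $2$ and the factorials into the stated binomial form.

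For part~(ii), the key observation is already noted in the paragraph preceding the theorem: if both $A$ and $X\setminus A$ are clades of $T$, then they are exactly the leaf sets below the two children of the root of $T$. Thus $T$ is determined by a pair consisting of a rooted phylogenetic tree on $A$ and a rooted phylogenetic tree on $X\setminus A$, giving $\varphi(a)\varphi(n-a)$ trees, and therefore $q_n(a,n-a) = \varphi(a)\varphi(n-a)/\varphi(n)$. To match the second displayed form, I would use the identity $\varphi(n-a+1) = (2n-2a-1)\varphi(n-a)$, which follows immediately from $\varphi(k) = (2k-3)!!$, to rewrite
\[
\frac{\varphi(a)\varphi(n-a)}{\varphi(n)} = \frac{1}{2n-2a-1}\cdot\frac{\varphi(a)\varphi(n-a+1)}{\varphi(n)},
\]
and then plug in the closed form from part~(i).

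Neither step poses a real obstacle: both formulas reduce to elementary counting once the clade/tree correspondence is set up. The mildest subtlety is keeping the bookkeeping straight in part~(i), in particular recognising that the ``outer'' tree has $n-a+1$ leaves (because the pruned vertex $v$ becomes a new leaf $*$), which accounts for the factor $\varphi(n-a+1)$ rather than $\varphi(n-a)$. After that, part~(ii) is essentially a one-line consequence of part~(i) combined with the recursion $\varphi(k+1)=(2k-1)\varphi(k)$.
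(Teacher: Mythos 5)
Your proposal is correct and follows essentially the same route as the paper: counting trees containing $A$ as a clade via the decomposition into a subtree on $A$ and an ``outer'' tree on $(X\setminus A)\cup\{*\}$ with $n-a+1$ leaves, then obtaining part~(ii) from the root decomposition and the identity $\varphi(n-a+1)=(2n-2a-1)\varphi(n-a)$. No gaps.
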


\begin{proof}	 
To derive the formula for $q_n(a)$, it suffices to show that there are $\varphi(a)\varphi(n-a+1)$ trees in $\mathcal{A}$, the subset of trees in $\tsp_X$ containing $A$ as a clade, because the probability of each tree in $\tsp_X$ is $1/\varphi(n)$. Without loss of generality, we can assume that $X=\{1,2,\cdots,n\}$ and $A=\{n-a+1,\cdots,n\}$. Let 
\tw{
$$X':=(X-A)\cup \{n-a+1\}=\{1,2,\cdots,n-a,n-a+1\};
$$
} then each tree in $\mathcal{A}$ can be generated by the following two steps:  picking up a tree in $\tsp_{X'}$ and replacing the leaf with label $n-a+1$ by a tree from $\tsp_{A}$. In addition, a different choice of trees in the first step or the second step will result in a different tree in $\mathcal{A}$. Since there are $\nbt(n-a+1)$ possible choices in the first step and $\nbt(a)$ ones in second step, we can conclude that the number of trees $\mathcal{A}$ is $\nbt(a)\nbt(n-a+1)$.  \tw{ In addition, using the fact that 
$$
\nbt(m)=(2m-3)!!=\frac{(2m-2)!!}{2^{m-1}(m-1)!}
$$
holds for $m\geqslant 1$, we have
$$q_n(a) = \frac{\varphi(a)\varphi(n-a+1)}{\varphi(n)} =
\frac{(2a-2)!(2n-2a)!(n-1)!}{(2n-2)!(a-1)!(n-a)!}
= \binom{n-1}{a-1} \binom{2n-2}{2a-2}^{-1}.$$
}

\tw{	
The proof of the formula for $q_n(a,n-a)$ is similar to the one for $q_n(a)$.  Let $\mathcal{A}^*$ be the collection of the trees in $\tsp_X$ containing both $A$ and $X-A$ as clades. Then a tree in $\mathcal{A}^*$ is uniquely determined by choosing a tree in $\tsp_A$,  and subsequently another tree from $\tsp_{X-A}$.
}  This implies the number of trees in $\mathcal{A}^*$ is $\nbt(a)\nbt(n-a)$. Hence
\begin{align*}
q_n(a,n-a) &= \frac{\nbt(a)\nbt(n-a)}{\nbt(n)} = \frac{1}{(2n-2a-1)} q_n(a) \\ 
		  &= \frac{1}{(2n-2a-1)} \binom{n-1}{a-1}\binom{2n-2}{2a-2}^{-1}. 
\end{align*}
\hfill $\square$
\end{proof}

Recall that in Theorem~\ref{thm:yhk:convex} we show that clade probabilities under the YHK model form a log-convex sequence. Here we establish a similar result for the PDA model, which implies that the sequences $\{q_n(a)\}_{1\leqslant a <n}$ and $\{q_n(a,n-a)\}_{1\leqslant a <n}$ are also unimodal.

\begin{theorem} 
\label{thm:pda:convex}
For $n\geqslant 3$, the sequence $\{q_n(a)\}_{1\leqslant a \leqslant n}$ and
 $\{q_n(a,n-a)\}_{1\leqslant a <n}$ are log-convex. Moreover, 
we have
\begin{enumerate}
\item[{\rm (i)}] $q_n(a+1) \geqslant q_n(a)$ when $a \geqslant n/2$, and $q_n(a+1) \leqslant q_n(a)$ when $a \leqslant n/2$. 
\item[{\rm (ii)}] $q_n(a+1,n-a-1) \geqslant q_n(a,n-a)$ when $a \geqslant (n-1)/2$, and $q_n(a+1,n-a-1) \geqslant q_n(a,n-a)$ when $a \leqslant (n-1)/2$.
\end{enumerate}
\end{theorem}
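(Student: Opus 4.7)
The plan is to exploit the explicit closed-form expressions for $q_n(a)$ and $q_n(a,n-a)$ supplied by Theorem~\ref{thm:pda-clade}, and reduce the entire statement to analysing a single one-term ratio in each case. In contrast to the YHK argument, where log-convexity was obtained by combining two log-convex factors via Lemma~\ref{lem:log-convex}, here I would work directly with the successive ratio, since massive cancellation simplifies it to a very clean expression.

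First I would compute
\[
\frac{q_n(a+1)}{q_n(a)} = \frac{\binom{n-1}{a}}{\binom{n-1}{a-1}} \cdot \frac{\binom{2n-2}{2a-2}}{\binom{2n-2}{2a}}
= \frac{n-a}{a}\cdot \frac{(2a)(2a-1)}{(2n-2a)(2n-2a-1)},
\]
and observe that the factors $n-a$ and $a$ cancel with the even parts of the second fraction, leaving the clean identity
\[
\frac{q_n(a+1)}{q_n(a)} = \frac{2a-1}{2n-2a-1}.
\]
Since the numerator is strictly increasing in $a$ and the denominator is strictly decreasing in $a$, the ratio $\{q_n(a+1)/q_n(a)\}$ is increasing, which is precisely the definition of log-convexity for the positive sequence $\{q_n(a)\}_{1\leqslant a \leqslant n}$. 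For part (i), this ratio is $\geqslant 1$ iff $2a-1\geqslant 2n-2a-1$, i.e.\ iff $a\geqslant n/2$, giving the claimed threshold exactly.

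Next I would run the analogous computation for the paired probabilities. Using the identity $q_n(a,n-a) = q_n(a)/(2n-2a-1)$ from Theorem~\ref{thm:pda-clade}(ii), the successive ratio telescopes to
\[
\frac{q_n(a+1,n-a-1)}{q_n(a,n-a)} = \frac{2n-2a-1}{2n-2a-3}\cdot \frac{q_n(a+1)}{q_n(a)} = \frac{2a-1}{2n-2a-3}.
\]
This ratio is again increasing in $a$ (yielding log-convexity of $\{q_n(a,n-a)\}_{1\leqslant a <n}$) and crosses $1$ precisely at $a = (n-1)/2$, which delivers part (ii). I should also verify the boundary entries to confirm positivity of the sequences, so that log-convexity of the ratios genuinely implies the log-convex inequality $q_n(a-1)q_n(a+1)\geqslant q_n(a)^2$.

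There is no real obstacle here beyond careful bookkeeping: the proof is essentially a one-line algebraic simplification in each case, and the somewhat surprising cleanness of the final ratios $(2a-1)/(2n-2a-1)$ and $(2a-1)/(2n-2a-3)$ is exactly what makes all three assertions (log-convexity, unimodality, and the explicit threshold) follow simultaneously. The only point requiring a small amount of care is the domain of $a$ for which each ratio is defined, so that, for instance, the paired-clade sequence is only considered up to $a=n-1$ to avoid dividing by $2n-2a-3=-1$ at $a=n-1$.
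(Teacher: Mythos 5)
Your proposal is correct and follows essentially the same route as the paper: both reduce everything to the ratio $q_n(a+1)/q_n(a)=(2a-1)/(2n-2a-1)$ (and its analogue $(2a-1)/(2n-2a-3)$ for the paired probabilities), with the only cosmetic difference being that you invoke the monotone-ratio characterization of log-convexity from the preliminaries while the paper writes out $q_n(a+1)q_n(a-1)/q_n(a)^2\geqslant 1$ explicitly.
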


\begin{proof}
By Theorem~\ref{thm:pda-clade} and $q_n(n)=1$, for $1\leqslant a <n$ we have 
	\begin{align*}
		\frac{q_n(a+1)}{q_n(a)} 
			&= \frac{2a-1}{2n-2a-1},
	\end{align*}
	which is greater than or equal to $1$ when $2a-1 \geqslant 2n-2a-1$, or equivalently when $a \geqslant n/2$. Thus Part (i) follows. Moreover, we have
\begin{align*}
\frac{q_n(a+1)q_n(a-1)}{q^2_n(a)} = \Big(\frac{2a-1}{2a-3}\Big)\Big(\frac{2n-2a+1}{2n-2a-1}\Big)\geqslant 1,
\end{align*}
for $2\leqslant a <n$, and hence $\{q_n(a)\}_{1\leqslant a \leqslant n}$ is log-convex.

	Similarly, we have
	\[
		\frac{q_n(a+1,n-a-1)}{q_n(a,n-a)} = \Big(\frac{2n-2a-1}{2n-2a-3}\Big) \Big(\frac{q_n(a+1)}{q_n(a)} \Big)= \frac{2a-1}{2n-2a-3},
	\]
	which is greater than or equal to $1$ when $2a-1 \geqslant 2n-2a-3$, or equivalently when $a \geqslant (n-1)/2$. Moreover, we have
\begin{align*}
\frac{q_n(a+1,n-a-1)q_n(a-1,n-a+1)}{q^2_n(a,n-a)} =
\Big( \frac{2a-1}{2a-3}\Big)\Big(\frac{2n-2a-1}{2n-2a-3}\Big)\geqslant 1,
\end{align*}
and hence $\{q_n(a)\}_{1\leqslant a < n}$ is log-convex.
\hfill $\square$
\end{proof}

\subsection{A comparison between the PDA and YHK models}
Using the formulae for computing clade probabilities under the PDA and YHK models presented in the previous two subsections,  here we investigate the differences between these two models. Let's begin with comparing  $p_n(a)$ and $q_n(a)$, the probabilities of a specific (and fixed) clade of size $a$ under the YHK and PDA models, respectively. As an example, consider the ratio of $p_n(a)/q_n(a)$ with $n=30$ as depicted in Figure~\ref{fig:pq}. Then it is clear that, except for $a=1$ \tw{for which} both $p_n(a) = q_n(a) = 1$, the ratio is strictly decreasing and is \tw{less than}  $1$ when $a$ is greater than certain value.  This `phase transition' type phenomenon holds for all $n>3$, as the following theorem shows.

\begin{figure}
	\centering
	\includegraphics[scale=0.7]{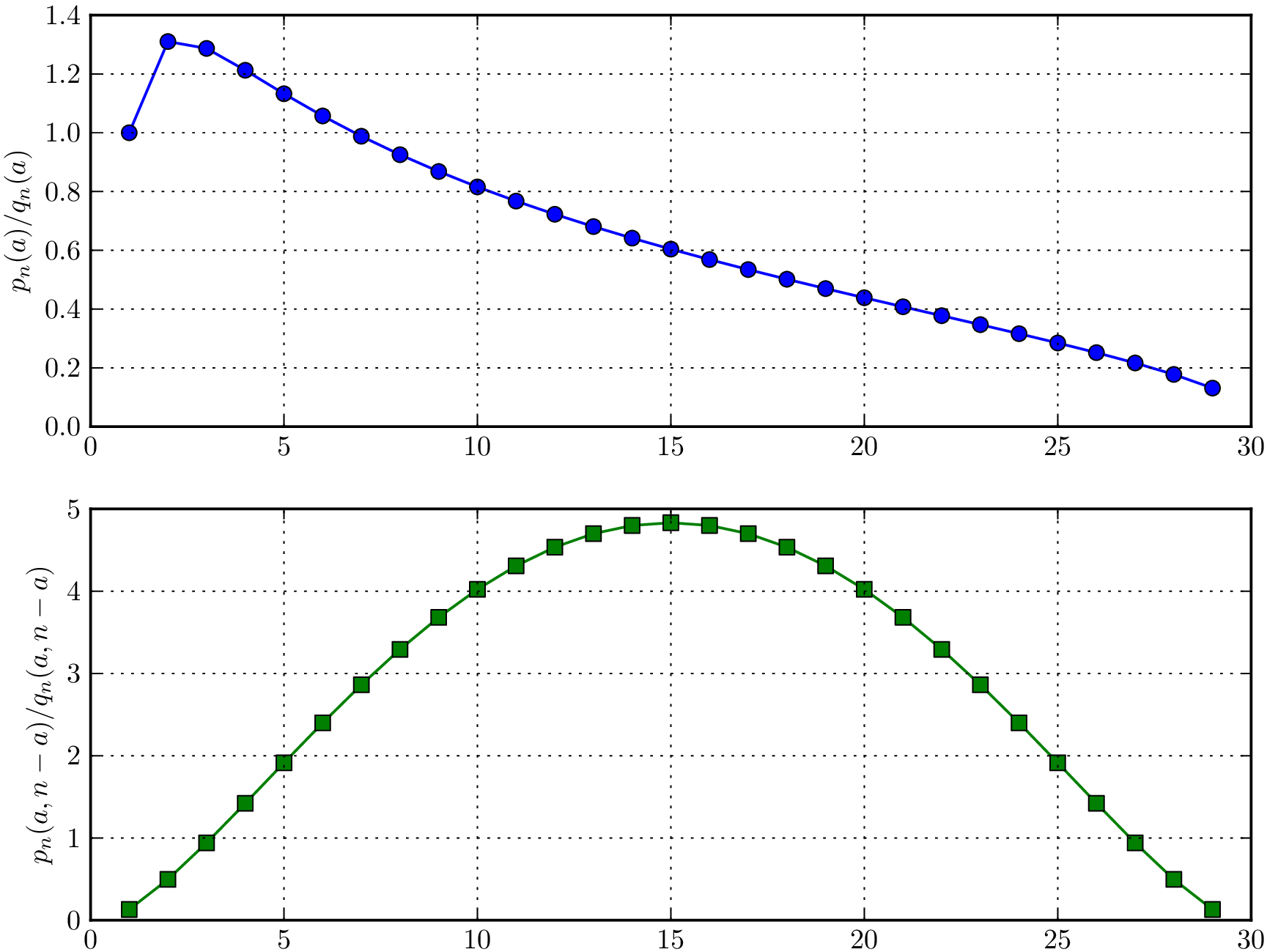}
	\caption{Plots of the ratios $p_n(a) / q_n(a)$ and $p_n(a,n-a)/q_n(a,n-a)$, with $n=30$ and $a=1,\dotsc,29$.}
	\label{fig:pq}
\end{figure}

\begin{theorem}
\label{thm:clade:comp}
For $n> 3$, there exists a number $\kappa(n)$ in $[2,n-1]$, such that $p_n(a)>q_n(a)$ for $2\leqslant a<\kappa(n)$, and $p_n(a)<q_n(a)$ for $\kappa(n)<a \leqslant n-1$.
\end{theorem}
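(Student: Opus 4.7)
The plan is to analyze the ratio
\[
r_n(a) := \frac{p_n(a)}{q_n(a)}
\]
on the integers $2, 3, \ldots, n-1$, and to show that $r_n$ is strictly decreasing on this range and satisfies $r_n(2) > 1 > r_n(n-1)$. Once this is established, letting $a^*$ denote the largest integer in $\{2, \ldots, n-1\}$ with $r_n(a^*) \geqslant 1$, one may take $\kappa(n) := a^*$ when $r_n(a^*) = 1$ and $\kappa(n) := a^* + 1/2$ otherwise; strict monotonicity then yields both claimed inequalities.

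For the monotonicity step, I would reuse the consecutive ratios already computed in the proofs of Theorems~\ref{thm:yhk:convex} and \ref{thm:pda:convex}: namely $p_n(a+1)/p_n(a) = a(a+1)/((a+2)(n-a))$ and $q_n(a+1)/q_n(a) = (2a-1)/(2n-2a-1)$. Dividing the first by the second gives
\[
\frac{r_n(a+1)}{r_n(a)} \;=\; \frac{a(a+1)(2n-2a-1)}{(a+2)(n-a)(2a-1)},
\]
and a short cross-multiplication reduces the inequality $r_n(a+1) < r_n(a)$ to $n(a-2) + 3a > 0$, which clearly holds for every integer $a \geqslant 2$. This establishes strict decrease throughout $\{2, \ldots, n-1\}$.

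For the boundary check, I would plug $a=2$ and $a=n-1$ directly into the closed forms from Theorems~\ref{thm:yule-clade} and \ref{thm:pda-clade}. A short calculation gives $r_n(2) = 2(2n-3)/(3(n-1))$ and $r_n(n-1) = 2(2n-3)/(n(n-1))$. The former is $>1$ precisely when $n>3$, and the latter is $<1$ precisely when $(n-2)(n-3) > 0$, i.e.\ again when $n>3$. Combined with the monotonicity step, this pins down the sign change and completes the argument.

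I do not expect a serious obstacle: the whole argument is driven by manipulating the factorial identities that are already in hand from the previous subsections. The one place where a miscalculation could derail the proof is the algebraic simplification of $r_n(a+1) < r_n(a)$, so I would perform that computation first and verify that it really does collapse to the clean condition $n(a-2)+3a > 0$ before writing out the boundary evaluations.
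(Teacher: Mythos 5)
Your proposal is correct and follows essentially the same route as the paper: both analyze the ratio $p_n(a)/q_n(a)$, show via the consecutive-ratio computation that it is strictly decreasing on $\{2,\dots,n-1\}$ (your condition $n(a-2)+3a>0$ is algebraically equivalent to the paper's $a>2n/(n+3)$), and then check the same boundary values $r_n(2)=2(2n-3)/(3(n-1))>1$ and $r_n(n-1)=2(2n-3)/(n(n-1))<1$. The only difference is your slightly more explicit construction of $\kappa(n)$, which is a harmless refinement.
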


\begin{proof}
\rev{Let 
\[
	g_n(a) = \frac{p_n(a)}{q_n(a)} = \frac{2n}{a(a+1)} \binom{2n-2}{2a-2} \binom{n}{a}^{-1} \binom{n-1}{a-1}^{-1}.
\]
Using the identity $\binom{m}{k+1} = \frac{m-k}{k+1}\binom{m}{k}$, we obtain
\[
	\frac{g_n(a+1)}{g_n(a)} = \frac{a(a+1)(2n-2a-1)}{(a+2)(2a-1)(n-a)}.
\]
We have 
\[
	a(a+1)(2n-2a-1) < (a+2)(2a-1)(n-a) \iff a > \frac{2n}{n+3},
\]
and hence $g_n(a) > g_n(a+1)$ for $2n/(n+3) < a \leqslant n-2$. Since $2n/(n+3) < 2$, we have $g_n(2) > g_n(3) > \dotsb > g_n(n-1)$.}

It is easy to see that for $n > 3$,
\[
	g_n(2) = \frac{2(2n-3)}{3(n-1)} > 1
\]
and 
\[
	g_n(n-1) = \frac{2(2n-3)}{n(n-1)} < 1.
\]
This and the fact that $g_n(a)$ is strictly decreasing on $[2,n-1]$ imply the existence of the number $\kappa(n)$ in the theorem.
\hfill $\square$
\end{proof}

Next, we consider $p_n(a,n-a)$ and $q_n(a,n-a)$. 
Note that by definition, both $p_n(a)$ and $q_n(a,n-a)$ are symmetric about $n/2$, as demonstrated by the plot of the ratio $p_n(a,n-a)/q_n(a,n-a)$ with $n=30$ in~Figure~\ref{fig:pq}. In addition, the figure shows that the ratio is strictly increasing on the interval $[1, \lfloor n/2 \rfloor]$ (and by the symmetry of the ratio, it is strictly decreasing on the interval $[\lceil n/2 \rceil, n-1]$).  This observation is made precise and rigorous in the following theorem. 

\begin{theorem}
For $n > 3$, there exists a number $\lambda(n)$ in $[1,\lfloor n/2 \rfloor]$, such that $p_n(a,n-a)<q_n(a,n-a)$ for $1\leqslant a \leqslant \lambda(n)$, and $p_n(a,n-a)>q_n(a,n-a)$ for $\lambda(n)<a \leqslant \lfloor n/2 \rfloor$.
\end{theorem}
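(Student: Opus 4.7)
The plan is to define
\[
	h_n(a) := \frac{p_n(a,n-a)}{q_n(a,n-a)}
\]
and verify three properties: (a) $h_n$ is strictly increasing on $\{1,2,\dots,\lfloor n/2\rfloor\}$; (b) $h_n(1)<1$ for $n\geqslant 4$; and (c) $h_n(\lfloor n/2\rfloor)>1$ for $n\geqslant 4$. A standard discrete intermediate value argument on the integer interval $[1,\lfloor n/2\rfloor]$ will then produce a $\lambda(n)$ with the stated properties.

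For (a), I would plug the closed forms from Theorems~\ref{thm:yule-clade}(ii) and~\ref{thm:pda-clade}(ii) into the consecutive ratio. The three binomial quotients $\binom{2n-2}{2a}/\binom{2n-2}{2a-2}$, $\binom{n}{a}/\binom{n}{a+1}$ and $\binom{n-1}{a-1}/\binom{n-1}{a}$ combine (with the factor $2n-2a$ cancelling one copy of $n-a$) to yield the compact form
\[
	\frac{h_n(a+1)}{h_n(a)} = \frac{(2n-2a-3)(a+1)}{(2a-1)(n-a)}.
\]
Expanding the difference $(2n-2a-3)(a+1)-(2a-1)(n-a)$ collapses pleasantly to $3(n-2a-1)$, so the ratio exceeds $1$ precisely when $a<(n-1)/2$, giving the required strict monotonicity of $h_n$ on $[1,\lfloor n/2\rfloor]$.

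Statement (b) follows immediately from the closed forms: $h_n(1)=2(2n-3)/(n(n-1))$, and $h_n(1)<1$ reduces to $(n-2)(n-3)>0$, which holds for $n\geqslant 4$. For (c), rather than attempting to evaluate $h_n(\lfloor n/2\rfloor)$ directly (which would require handling central binomial coefficients), I would argue by contradiction using the global identity
\[
	\sum_{a=1}^{n-1}\binom{n}{a}\,\pp(a,n-a) = 2,
\]
which is valid for any exchangeable probability measure $\pp$ on $\tsp_X$ because $\sum_{A\subseteq X}\ind_T(A,X\setminus A)=2$ for every $T$ (only the two root children and their complements contribute). Applied to both $\pyule$ and $\puni$, this forces the weighted sums of $p_n(\cdot,n-\cdot)$ and $q_n(\cdot,n-\cdot)$ to coincide. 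If instead $h_n(\lfloor n/2\rfloor)\leqslant 1$, then (a) combined with the symmetry $h_n(a)=h_n(n-a)$ (which is immediate from $p_n(a,n-a)=p_n(n-a,a)$ and $q_n(a,n-a)=q_n(n-a,a)$) would force $p_n(a,n-a)\leqslant q_n(a,n-a)$ for every $1\leqslant a \leqslant n-1$, with strict inequality at $a=1$ by (b); summing against $\binom{n}{a}$ would yield $2<2$, a contradiction.

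The main obstacle is carrying through the binomial cancellations in (a) carefully enough to reach the factored expression $(2n-2a-3)(a+1)/[(2a-1)(n-a)]$; once that simplification is secured, every remaining step reduces to elementary bookkeeping.
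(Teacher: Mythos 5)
Your proposal is correct, and steps (a) and (b) coincide exactly with the paper's argument: the paper forms the same ratio $h_n(a)=p_n(a,n-a)/q_n(a,n-a)$, derives the same consecutive quotient $(a+1)(2n-2a-3)/[(2a-1)(n-a)]$, reduces it to the sign of $3(n-2a-1)$, and checks $h_n(1)=2(2n-3)/(n(n-1))<1$. Where you genuinely diverge is step (c). The paper evaluates $h_n(\lfloor n/2\rfloor)$ directly, splitting into the cases $n=2k$ and $n=2k+1$, simplifying to $\binom{4k-2}{2k-2}\binom{2k-1}{k-1}^{-2}$ and $\binom{4k}{2k-2}\binom{2k}{k-1}^{-2}$ respectively, and bounding these below by $1$ via a counting argument (subsets of $A\cup B$ with $k-1$ elements in each part form a subfamily of all $(2k-2)$-subsets). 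Your alternative replaces this computation with the normalization identity $\sum_{a=1}^{n-1}\binom{n}{a}\,\pp(a,n-a)=2$, valid for any exchangeable measure since each tree has exactly two root-child clades, combined with the symmetry $h_n(a)=h_n(n-a)$ and the strict inequality at $a=1$ to derive $2<2$ from the assumption $h_n(\lfloor n/2\rfloor)\leqslant 1$. This is a legitimate and arguably cleaner route: it avoids the parity case split and the central-binomial estimates entirely, at the cost of needing the (easily verified) global identity; the paper's direct evaluation, by contrast, yields the slightly stronger quantitative information of an explicit closed form for $h_n$ at the midpoint. Both arguments deliver the theorem as stated.
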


\begin{proof}
\rev{
Let 
\[
	h_n(a) = \frac{p_n(a,n-a)}{q_n(a,n-a)} = \frac{2(2n-2a-1)}{n-1} \binom{2n-2}{2a-2} \binom{n}{a}^{-1} \binom{n-1}{a-1}^{-1}.
\]}
Then 
\[
	\frac{h_n(a+1)}{h_n(a)} = \frac{(a+1)(2n-2a-3)}{(2a-1)(n-a)} > 1,
\]
where the last inequality follows from \tw{the observation} that
$$
(a+1)(2n-2a-3)-(n-a) (2a-1)=3(n-2a-1) > 0
%=2n^2-4an-n+1\geqslant 0
$$
holds for $1\leqslant a \leqslant \lfloor n/2 \rfloor - 1$.  This implies that the function $h_n(a)$ is strictly increasing on the interval $[1, \lfloor n/2 \rfloor]$.

Thus, it now suffices to show that $h_n(1) \leqslant 1$ and $h_n(\lfloor n/2 \rfloor) \geqslant 1$ in order to demonstrate the existence of $\lambda(n)$. We have
\[
	h_n(1) = \frac{p_n(1, n-1)}{q_n(1, n-1)} = \frac{2(2n-3)}{n(n-1)} < 1,
\]
if $n > 3$. Let $k = \lfloor n/2 \rfloor$. If $n$ is even (i.e., $k=n/2$), then for $k \geqslant 2$
\begin{align*}
	h_{2k}(k) &= \frac{2(4k-2k-1)}{(2k-1)} \binom{4k-2}{2k-2} \binom{2k}{k}^{-1} \binom{2k-1}{k-1}^{-1} \\
	 &= \binom{4k-2}{2k-2} \binom{2k-1}{k-1}^{-2} > 1.
\end{align*}
The inequality in the last equation can be seen as follows. Let $A$ and $B$ be two sets, each having $(2k-1)$ elements. The number of subsets of $A \cup B$ that have $k-1$ elements from each of $A$ and $B$ is $\binom{2k-1}{k-1}^2$. On the other hand, the total number of $(2k-2)$-subsets of $A \cup B$ is $\binom{4k-2}{2k-2}$.

If $n$ is odd (i.e., $k=(n-1)/2$), then 
\begin{align*}
	h_{2k+1}(k) 
		&= \frac{2(2k+1)}{2k} \binom{4k}{2k-2}\binom{2k+1}{k}^{-1}\binom{2k}{k-1}^{-1}\\
		&= \frac{2k+1}{k} \binom{4k}{2k-2} \frac{k}{2k+1} \binom{2k}{k-1}^{-2}\\
		&= \binom{4k}{2k-2} \binom{2k}{k-1}^{-2}.
\end{align*}
Using the same argument as in proving $h_{2k}(k) > 1$, we also have $h_{2k+1}(k) \geqslant 1$ for $k \geqslant 1$.
\hfill $\square$
\end{proof}

Let $A$ be a fixed subset of $X$ with size $a$, where $1 \leqslant a \leqslant n-1$. 
In the previous two theorems, we present comparison results for $\pp(A)$ and $\pp(A,X\setminus A)$ under \tw{the YHK and PDA models}. We end this subsection with a comparison study of $\pp(A,X\setminus A)/\pp(A)$, that is,
the probability that  a tree $T \in \tsp_X$ sampled according to probability measure $\pp$ contains both $A$ and $X\setminus A$ as its clades (which means that $A$ and $X\setminus A$ are the clades below the two children of the root of $T$), given that $A$ is a clade of $T$. To this end, let
\[
	u_n(a) = \frac{p_n(a,n-a)}{p_n(a)}-\frac{q_n(a,n-a)}{q_n(a)}=\frac{a(a+1)}{n(n-1)}-\frac{1}{2n-2a-1}
\] 
be the difference between the two conditional probabilities under the two models. \tw{We are interested in the sign changes of $u_n(a)$ as it indicates a `phase transitions'  between these two models.  
For instance, considering the values of $u_n(a)$ for $n=30$ as depicted in~Figure~\ref{fig:u},   then there exists a unique change of sign. 
%and note that there is a change in sign of $u_n(a)$
Indeed, the observation that there exists a unique change of sign of $u_n(a)$ holds for general $n$, as the following theorem shows.
}

\begin{figure}
	\centering
	\includegraphics[scale=0.7]{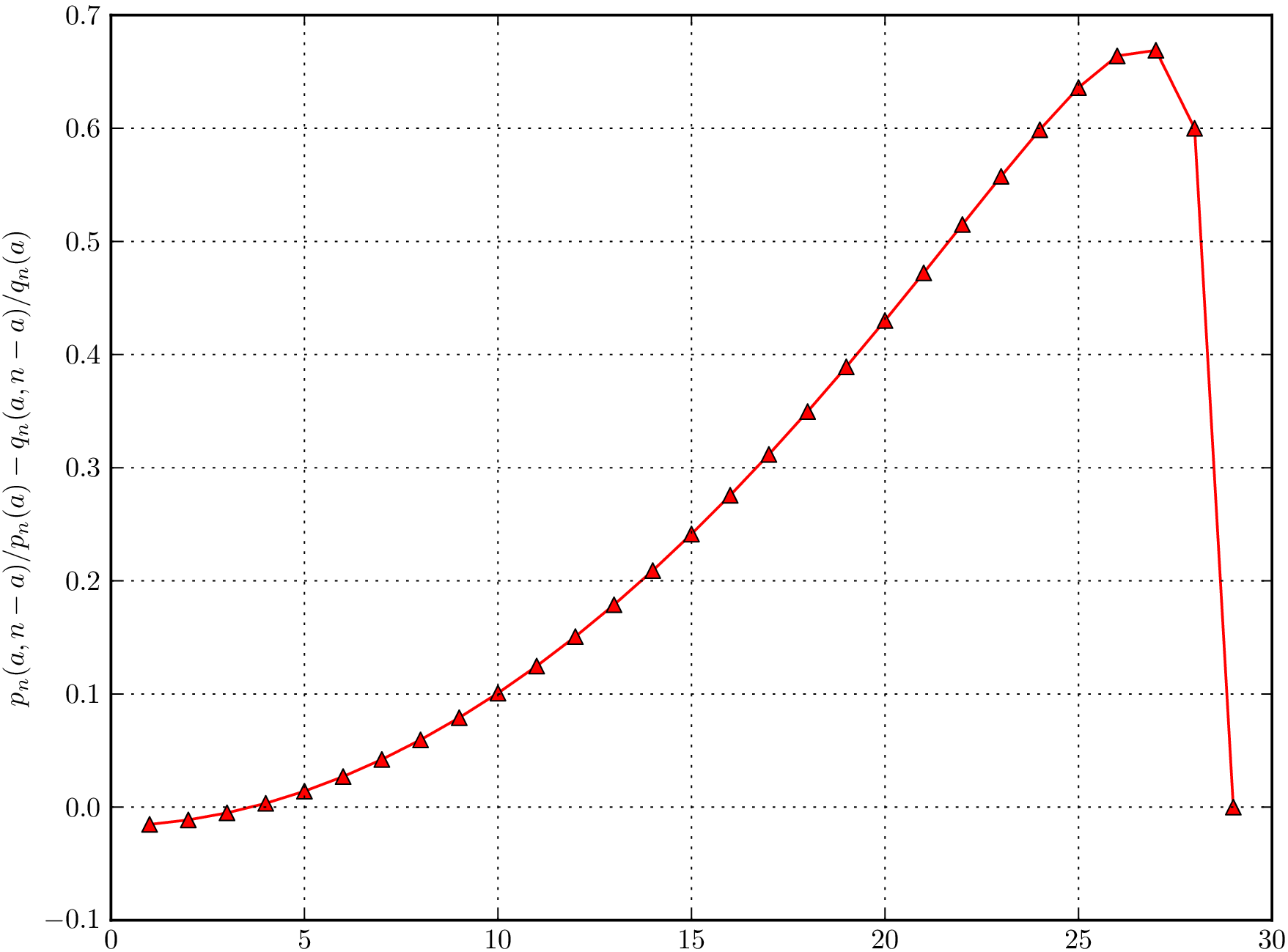}
	\caption{Plot of function $u_n(a)$ % = p_n(a,n-a)/p_n(a) - q_n(a,n-a)/q_n(a)$, 
		with $n=30$. % and $a=1,\dotsc,29$.
		}
	\label{fig:u}
\end{figure}

\begin{theorem}
For $n \geqslant 3$, there exists $\tau(n) \in [1, n-1]$ such that $u_n(a) \leqslant 0$ if $a \leqslant \tau(n)$ and $u_n(a) \geqslant 0$ if $a \geqslant \tau(n)$.
\end{theorem}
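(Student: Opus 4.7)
The plan is to clear denominators so that the sign of $u_n(a)$ is governed by a simple polynomial. Since $2n-2a-1 > 0$ for every integer $a$ with $1 \leqslant a \leqslant n-1$, multiplying through by the positive quantity $n(n-1)(2n-2a-1)$ shows that $u_n(a) \geqslant 0$ is equivalent to
$$g(a) := a(a+1)(2n-2a-1) - n(n-1) \geqslant 0,$$
and analogously for the reverse inequality. Hence it suffices to prove that $g$ is non-positive on an initial segment of $[1,n-1]$ and non-negative on the complementary terminal segment, i.e.\ that $g$ has (at most) a single sign change on this interval.

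A direct substitution gives $g(n-1) = n(n-1)\cdot 1 - n(n-1) = 0$, so $(a-(n-1))$ is a factor of $g$ viewed as a cubic in $a$. Carrying out this division yields
$$g(a) = -\big(a - (n-1)\big)\big(2a^2 + a - n\big).$$
The quadratic factor $2a^2 + a - n$ has a unique positive root $\tau(n) := (-1 + \sqrt{1+8n})/4$, and an elementary calculation verifies $\tau(n) \in [1, n-1]$ for every $n \geqslant 3$: the bound $\tau(n) \geqslant 1$ reduces to $\sqrt{1+8n} \geqslant 5$, while $\tau(n) \leqslant n-1$ reduces to $2(n-1)^2 \geqslant 1$.

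To finish, observe that on $[1, n-1]$ the factor $-(a-(n-1))$ is non-negative, while $2a^2 + a - n$ is non-positive for $a \leqslant \tau(n)$ and non-negative for $a \geqslant \tau(n)$. Hence $g(a) \leqslant 0$ on $[1, \tau(n)]$ and $g(a) \geqslant 0$ on $[\tau(n), n-1]$, which translates back through the equivalence above into the desired statement about $u_n$. The only genuine obstacle is spotting that $a = n-1$ is a root of $g$; once this is noticed, the proof is routine polynomial manipulation and no analytic estimates are required.
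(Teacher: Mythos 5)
Your proof is correct, and it takes a genuinely different and more elementary route than the paper's. The paper treats $u_n$ as the restriction of a real function $f_n(x)=\frac{x(x+1)}{n(n-1)}-\frac{1}{2n-2x-1}$, computes $f_n'$, reduces its sign to that of the cubic $t(2n-t)^2-2n(n-1)$ in $t=2x+1$, locates that cubic's three roots by intermediate-value arguments at $t=0,1,n,2n-1$, and deduces that $u_n$ increases up to some $x_2>n$'s counterpart and then decreases, finishing with the boundary evaluations $u_n(1)\leqslant 0$ and $u_n(n-1)=0$. You instead clear the (positive) denominator $n(n-1)(2n-2a-1)$ and observe that the resulting cubic $g(a)=a(a+1)(2n-2a-1)-n(n-1)$ vanishes at $a=n-1$, yielding the exact factorization $g(a)=\bigl((n-1)-a\bigr)\bigl(2a^2+a-n\bigr)$ (which I have verified by expansion). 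Since the first factor is non-negative on $[1,n-1]$ and the quadratic factor changes sign exactly once there, at its unique positive root, the single sign change of $u_n$ follows immediately; your checks that $\tau(n)=(-1+\sqrt{1+8n})/4$ lies in $[1,n-1]$ for $n\geqslant 3$ are also correct. Your approach buys an explicit closed form for the threshold $\tau(n)$ and avoids all calculus and root-counting; the paper's approach yields slightly more structural information (the unimodality of the sequence $u_n(1),\dots,u_n(n-1)$, first increasing then decreasing), but for the stated theorem your argument is complete and arguably cleaner.
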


\begin{proof}
\rev{Consider the function 
\[
	f_n(x) = \frac{x(x+1)}{n(n-1)} - \frac{1}{2n-2x-1}, \quad x \in \mathbb{R}.
\]
Clearly $f_n(x)$ agrees with $u_n(a)$ when $x = a$. Then
\[
	f_n'(x) = \frac{2x+1}{n(n-1)} - \frac{2}{(2n-2x-1)^2} = \frac{t(2n-t)^2 - 2n(n-1)}{n(n-1)(2n-t)^2},
\]
where $t = 2x+1$. The sign of $f_n'(x)$ thus depends on the sign of
\[
	g_n(t) = t(2n-t)^2 - 2n(n-1).
\]
We see that $g_n(t)$ is a polynomial of $t$ of degree $3$, and hence it can have at most three (real) roots. On the other hand, for $n \geqslant 3$, we have:
\begin{align*}
	g_n(0) &= -2n(n-1) < 0,\\ 
	g_n(1) &= n^2 + (n-1)^2 > 0,\\
	g_n(2n-1) &= -2n(n-2)-1 < 0,
\end{align*}
and 
$$\lim_{t \to \infty} g_n(t) = \infty.$$
Therefore, $g_n(t)$ has exactly three roots $t_1 \in (0, 1)$, $t_2 \in (1, 2n-1)$, and $t_3 > 2n-1$. Note further that $g_n(n) = n^3 - 2n(n-1) = n((n-1)^2 + 1) > 0$, and hence $t_2 > n$. Denoting $x_i = (t_i-1)/2$ for $1\leqslant i \leqslant3$, then we have $f'_n(x)=0$ for $x\in \{x_1,x_2,x_3\}$, 
$f'_n(x)<0$ for $x\in (-\infty,x_1)\cup (x_2,x_3)$, and $f'_n(x)>0$ for $x\in (x_1,x_2)\cup (x_3,\infty)$. } 
\rev{Since $x_1 = (t_1-1)/2 < 0$ and $f_n(a) = u_n(a)$, the sign of $f_n'(x)$ implies that $u_n(1) < u_n(2) < \dotsb < u_n(\lfloor x_2 \rfloor)$. Similarly, we also have $u_n(\lceil x_2 \rceil) > \dotsb > u_n(n-2) > u_n(n-1).$ It is easy to see that for $n \geqslant 3$
\begin{gather*}
	u_n(1) = \frac{2}{n(n-1)} - \frac{1}{2n-3} = -\frac{(n-2)(n-3)}{n(n-1)(2n-3)} \leqslant 0, \\
	u_n(n-1) = \frac{n(n-1)}{n(n-1)} - \frac{1}{2n-2(n-1)-1} = 0.
\end{gather*}
}
\rev{Since $x_2 = (t_2 - 1) / 2 < n-1$ and $x_3 = (t_3 - 1)/2  > n-1$, $\lceil x_2 \rceil \leqslant n-1 < x_3$. This implies that $u_n(\lceil x_2 \rceil) > \dotsb > u_n(n-2) > u_n(n-1) = 0$. Therefore, there exists a positive number $\tau(n) \in [1, x_2]$ such that $u_n(a) \leqslant 0$ if $a \leqslant \tau(n)$ and $u_n(a) \geqslant 0$ if $a \geqslant \tau(n)$.}
\hfill $\square$
\end{proof}

\subsection{Correlation results on the PDA model}
\label{sec:corr:clade:PDA}
In this section, we generalize results in Section~\ref{subsec:pda-clade} for a collection of disjoint subsets of $X$, and then show that the two indicator variables $\ind_T(A)$ and $\ind_T(B)$ are positively correlated.  

\begin{theorem}
\label{thm:partition:pda}
Let $A_1, \dotsc, A_k$ be $k$ disjoint (nonempty) subsets of $X$. 
Denoting $ |A_1|+\dotsb+|A_k|$ by $m$, then we have
\begin{align*}
	\puni(A_1, \dotsc, A_k) 
		&= \frac{\varphi(n-m+k)\prod_{i=1}^k \varphi(\vert A_i \vert)}{\varphi(n)}.%\\
%		&= \binom{n-1}{a_1-1,\dotsc,a_k-1, c-1} \binom{2n-2}{2a_1-2,\dotsc,2a_k-2,2c-2}^{-1},
\end{align*}
%where $a_i = |A_i|$, $m = a_1+\dotsb+a_k$.%, and $c=n-m+k$.
\end{theorem}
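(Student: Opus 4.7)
The plan is to extend the counting argument used for the single-clade case in the proof of Theorem~\ref{thm:pda-clade}(i) to a family of $k$ disjoint subsets. Since the PDA model assigns probability $1/\varphi(n)$ to each tree in $\tsp_X$, we have $\puni(A_1,\dotsc,A_k) = |\mathcal{A}|/\varphi(n)$, where $\mathcal{A}$ denotes the collection of trees in $\tsp_X$ in which each $A_i$ is a clade. It therefore suffices to establish that $|\mathcal{A}| = \varphi(n-m+k)\prod_{i=1}^k \varphi(|A_i|)$.

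First I would construct a reduced leaf set. Let $X_0 = X \setminus (A_1 \cup \cdots \cup A_k)$, pick a fresh symbol $x_i$ for each $1 \leqslant i \leqslant k$, and set $X' = X_0 \cup \{x_1,\dotsc,x_k\}$, so that $|X'| = n - m + k$. I would then describe the following ``expansion'' map into $\mathcal{A}$: given a tree $T' \in \tsp_{X'}$ and trees $T_i \in \tsp_{A_i}$ for $1 \leqslant i \leqslant k$, replace each leaf $x_i$ of $T'$ with the rooted tree $T_i$, identifying the former location of $x_i$ with the root of $T_i$. The resulting tree lies in $\tsp_X$, and by construction each $A_i$ is precisely the set of leaves below the vertex that replaced $x_i$, hence a clade; so the image of the map lies in $\mathcal{A}$.

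The key step is to show that this expansion map is a bijection from $\tsp_{X'} \times \tsp_{A_1} \times \cdots \times \tsp_{A_k}$ onto $\mathcal{A}$. Since the $A_i$ are pairwise disjoint, the collection $\{A_1,\dotsc,A_k\}$ is automatically a hierarchy, and any $T \in \mathcal{A}$ uniquely determines, for each $i$, the maximal subtree $T_i$ of $T$ whose leaf set is $A_i$; contracting each such subtree back to the single leaf $x_i$ recovers a unique tree $T' \in \tsp_{X'}$. This yields the inverse of the expansion map. Applying the product rule then gives $|\mathcal{A}| = \varphi(n - m + k) \prod_{i=1}^k \varphi(|A_i|)$, and dividing by $\varphi(n)$ completes the proof.

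The main obstacle I anticipate is verifying the bijection cleanly when some $A_i$ is a singleton; in that case the ``subtree'' $T_i$ is just a single leaf, and the convention $\varphi(1)=1$ is needed to make the counting consistent. As a sanity check, when $k = 1$ with $|A_1| = a$ we have $n - m + k = n - a + 1$, so the formula specialises to $\varphi(a)\varphi(n-a+1)/\varphi(n)$, in agreement with Theorem~\ref{thm:pda-clade}(i).
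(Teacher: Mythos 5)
Your proposal is correct and follows essentially the same route as the paper's proof: the paper also counts trees in $\mathcal{A}$ by building a tree on $\bigl(X \setminus \bigcup_i A_i\bigr) \cup \{x_1,\dotsc,x_k\}$ and replacing each placeholder leaf $x_i$ with a tree in $\tsp_{A_i}$, giving $\varphi(n-m+k)\prod_i \varphi(|A_i|)$ trees. Your explicit verification that this expansion map is a bijection (and the remark on singleton $A_i$) is a welcome extra level of care, but it is the same argument.
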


\begin{proof}
We first compute the number of trees that have $A_1, \dotsc, A_k$ as clades. To this end, note that such a tree can be constructed in two steps:
\begin{enumerate}
	\item Build a tree on $\left(X \setminus \bigcup_{i=1}^k A_i \right) \cup\{x_1,\dotsc,x_k\}$, where $x'_1, \dotsc, x'_k$ are leaves not in $X$ serving as ``placeholders'' used in the second step.
	\item Replace each $x'_i$ with a tree in $\mathcal{T}_{A_i}$.
\end{enumerate}
There are $\varphi(n-m+k)$ different choices for a tree in the first step, and  $\prod_{i=1}^k \varphi(|A_i|)$ different ways to replace $x'_1, \dotsc, x'_k$ by trees in $\mathcal{T}_{A_1}, \dotsc, \mathcal{T}_{A_k}$ in the second step. 
Therefore the number of trees that have $A_1, \dotsc, A_k$ as clades is 
$\varphi(n-m+k)\prod_{i=1}^k \varphi(|A_i|)$. 
Together with the fact that each tree in $\mathcal{T}_X$ is chosen with  probability $1/\varphi(n)$ under the PDA model, this implies the theorem.
\hfill $\square$
\end{proof}

Note that $|A_1|+\dotsb+|A_k|=n$ when  $A_1, \dotsc, A_k$ form a partition of $X$. Therefore,  we obtain the following result as a simple consequence of Theorem~\ref{thm:partition:pda} (see {Theorem 5.1} in~\citet{zhu11a} for a parallel result on the YHK model). 
\begin{corollary}
If $A_1, \dotsc, A_k$ form a partition of $X$, then 
\begin{align*}
		\puni(A_1, \dotsc, A_k)  &= \frac{\varphi(k) \prod_{i=1}^k \varphi(|A_i|)}{\varphi(n)}.
	\end{align*}
\end{corollary}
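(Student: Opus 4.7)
The plan is to observe that the corollary is an immediate specialization of Theorem~\ref{thm:partition:pda}. Since $A_1,\dotsc,A_k$ form a partition of $X$, the sizes sum to $n$; that is, in the notation of Theorem~\ref{thm:partition:pda} we have $m = |A_1| + \dotsb + |A_k| = n$. Substituting $m = n$ into $\varphi(n - m + k)$ gives $\varphi(k)$, and the expression $\varphi(k)\prod_{i=1}^{k}\varphi(|A_i|)/\varphi(n)$ appearing in the corollary follows directly from the formula established in Theorem~\ref{thm:partition:pda}.

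Since the corollary is a direct consequence, there is no substantial obstacle; the only point that warrants a brief comment is that the two-step construction used in the proof of Theorem~\ref{thm:partition:pda} remains meaningful when $m = n$. In this boundary case the set $X \setminus \bigcup_{i=1}^{k} A_i$ is empty, so the first step reduces to choosing a tree on the $k$ placeholders $x'_1, \dotsc, x'_k$ alone (contributing the factor $\varphi(k)$), and the second step still replaces each placeholder $x'_i$ by a tree in $\mathcal{T}_{A_i}$ (contributing $\prod_{i=1}^{k} \varphi(|A_i|)$).

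As a consistency check, one may verify the two smallest cases. For $k = 1$ we must have $A_1 = X$, and the formula yields $\varphi(1)\varphi(n)/\varphi(n) = 1$, consistent with $X$ being a clade of every tree in $\tsp_X$. For $k = 2$, writing $a = |A_1|$, the formula reduces to $\varphi(2)\varphi(a)\varphi(n-a)/\varphi(n) = \varphi(a)\varphi(n-a)/\varphi(n)$, recovering the expression for $q_n(a, n-a)$ given in Theorem~\ref{thm:pda-clade}(ii). These checks confirm that no further argument beyond the substitution $m = n$ is required.
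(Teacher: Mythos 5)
Your proposal is correct and takes the same route as the paper, which likewise derives the corollary by noting that $|A_1|+\dotsb+|A_k|=n$ for a partition and substituting $m=n$ into Theorem~\ref{thm:partition:pda}. Your added remarks on the boundary case $X\setminus\bigcup_i A_i=\emptyset$ and the consistency checks for $k=1,2$ are sound but not needed.
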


Theorem~\ref{thm:partition:pda} is a general result concerning a collection of clades. When there are only two clades, the below theorem provides a more detailed analysis. 

\begin{theorem}
\label{thm:cor:PDA}
Let $A$ and $B$ be two subsets of $X$ with $a\leqslant b$, where $a=|A|$ and $b=|B|$. 
Then we have

\begin{equation*}
	\puni(A,B) =
	\begin{cases}
	 \frac{\varphi(a)\varphi(n-b+1)\varphi(b-a+1)}{\varphi(n)}, & \text{if $A\subseteq B$,}\\ 
	 \frac{\varphi(a)\varphi(b)\varphi(n-a-b+2)}{\varphi(n)}, & \text{if $A$ and $B$ are disjoint,}\\
	 0, & \text{otherwise.}
	\end{cases}
\end{equation*}
\end{theorem}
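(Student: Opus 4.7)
The plan is to dispatch the three cases separately, each time reducing to counting the number of trees in $\tsp_X$ that contain the specified subsets as clades, and then dividing by $\varphi(n)$ since each tree has probability $1/\varphi(n)$ under the PDA model.

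First I would dispose of the ``otherwise'' case, which is where $A\cap B$ is nonempty but neither $A\subseteq B$ nor $B\subseteq A$ holds. Here $\{A,B\}$ fails to be a hierarchy, since $A\cap B\notin\{\emptyset,A,B\}$, and hence by the hierarchy characterization recalled just before Lemma~\ref{lem:set:EP} (following~\citet{semple03a}), there is no tree $T\in\tsp_X$ with $\ind_T(A,B)=1$. Thus $\puni(A,B)=0$ immediately.

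The disjoint case is then an immediate corollary of Theorem~\ref{thm:partition:pda} applied with $k=2$, $|A_1|=a$, $|A_2|=b$, and $m=a+b$, which gives exactly $\varphi(n-a-b+2)\varphi(a)\varphi(b)/\varphi(n)$.

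The nested case $A\subseteq B$ is the substantive piece and the part I expect to require the most care. The idea is a two-level ``placeholder'' construction that iterates the one used in the proof of Theorem~\ref{thm:pda-clade}. A tree in $\tsp_X$ containing both $A$ and $B$ as clades can be built in two independent steps. First, build a tree on $(X\setminus B)\cup\{y_B\}$, where $y_B$ is a fresh placeholder leaf; there are $\varphi(n-b+1)$ such trees. Second, replace $y_B$ by a tree on $B$ in which $A$ is a clade. The number of trees on the leaf set $B$ (of size $b$) that contain $A$ (of size $a$) as a clade is, by the counting done in the proof of Theorem~\ref{thm:pda-clade}(i) applied to leaf set $B$, equal to $\varphi(a)\varphi(b-a+1)$. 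Multiplying gives $\varphi(n-b+1)\varphi(b-a+1)\varphi(a)$ trees, and dividing by $\varphi(n)$ yields the claimed formula.

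The only subtlety is verifying that this two-step construction is in bijection with the set of trees containing both $A$ and $B$ as clades: different choices in either step produce different trees, and conversely every such tree decomposes uniquely by contracting $B$ to a placeholder leaf and then contracting $A$ inside the resulting subtree on $B$. This uniqueness is essentially the same argument already used for the single-clade case of Theorem~\ref{thm:pda-clade}, merely iterated once, and the nesting $A\subseteq B$ is precisely what makes the two contractions well-defined and independent of one another.
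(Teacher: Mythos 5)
Your proof is correct and follows essentially the same route as the paper's: the ``otherwise'' case via the hierarchy characterization, the disjoint case as the $k=2$ instance of Theorem~\ref{thm:partition:pda}, and the nested case by applying the placeholder counting of Theorem~\ref{thm:pda-clade} twice. You simply spell out in detail the two-level construction that the paper compresses into the phrase ``follows by applying Theorem~\ref{thm:pda-clade} twice.''
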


\begin{proof}
The first case follows by applying Theorem 2 twice. The second case is a special case of Theorem~\ref{thm:partition:pda}. The third case holds because if $A\cap B\not \in \{A,B, \emptyset\}$, then there exists no tree that contains both $A$ and $B$ as its clades. 
\hfill $\square$
\end{proof}

To establish the last result of this subsection, we need the following technical lemma. 

\begin{lemma}
\label{lem:tree:ineq}
Let $m,n,m',n'$ be positive numbers with $(m-m')(n-n')\geqslant 0$, then
\begin{equation}\label{eq:tree:ineq}
	\nbt(m'+n')\nbt(m+n)\geqslant\nbt(m+n')\nbt(m'+n).
\end{equation}
In particular, if $a\leqslant b\leqslant b'\leqslant a'$ are positive numbers with $a+a'=b+b'$, then we have 
\begin{equation}
	\nbt(a)\nbt(a')\geqslant \nbt(b)\nbt(b').
\end{equation}
\end{lemma}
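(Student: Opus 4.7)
The plan is to prove the second (``in particular'') statement first, and then deduce the general inequality~(\ref{eq:tree:ineq}) from it by a direct substitution. Indeed, suppose we have already proved the ``in particular'' statement. Assume without loss of generality that $m\geqslant m'$ and $n\geqslant n'$ (the other case, $m\leqslant m'$ and $n\leqslant n'$, is symmetric). Set
\[
s = m'+n',\quad t=m+n,\quad p=\min\{m+n',\, m'+n\},\quad q=\max\{m+n',\, m'+n\}.
\]
Then $s\leqslant p\leqslant q\leqslant t$, and $s+t=p+q$. Applying the ``in particular'' statement with $(a,b,b',a')=(s,p,q,t)$ yields $\varphi(s)\varphi(t)\geqslant \varphi(p)\varphi(q)$, which is precisely~(\ref{eq:tree:ineq}).

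To prove the ``in particular'' statement, the key observation is the simple recursion
\[
\frac{\varphi(k)}{\varphi(k-1)}=2k-3\qquad\text{for }k\geqslant 2,
\]
which follows immediately from $\varphi(k)=(2k-3)!!$. The idea is to move from $(b,b')$ to $(a,a')$ by the $b-a$ elementary steps $(c,c')\mapsto (c-1,c'+1)$, and to show that at each such step the product $\varphi(c)\varphi(c')$ does not decrease. The ratio of the product after and before one such step is
\[
\frac{\varphi(c-1)\varphi(c'+1)}{\varphi(c)\varphi(c')}=\frac{2c'-1}{2c-3},
\]
which is $\geqslant 1$ exactly when $c'\geqslant c-1$. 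Along the chain of steps from $(b,b')$ down to $(a,a')$, the left entry is always at most $b\leqslant b'\leqslant c'$ (the right entry only grows), so the inequality $c'\geqslant c-1$ holds throughout and the product can only increase.

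Combining these $b-a$ inequalities gives $\varphi(a)\varphi(a')\geqslant \varphi(b)\varphi(b')$, completing the argument. The only step that requires any care is verifying that the side-condition $c'\geqslant c-1$ persists through the telescoping chain, but this is straightforward because the chain begins at $(b,b')$ with $b\leqslant b'$ and each subsequent step strictly widens the gap between the two entries.
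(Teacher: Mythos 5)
Your proof is correct, and it reorganizes the argument compared with the paper. The paper proves the general four-variable inequality first: assuming $m\geqslant m'$ and $n\geqslant n'$, it writes $\nbt(m+n)/\nbt(m+n')$ and $\nbt(m'+n)/\nbt(m'+n')$ as two products of the same number ($n-n'$) of consecutive odd integers and compares them term by term; the ``in particular'' statement is then obtained by the substitution $m'=n'=a/2$, $m=b-a/2$, $n=b'-a/2$. You go in the opposite direction: you establish the special exchange inequality $\nbt(a)\nbt(a')\geqslant\nbt(b)\nbt(b')$ first, by decomposing the passage from $(b,b')$ to $(a,a')$ into unit swaps $(c,c')\mapsto(c-1,c'+1)$ and checking that each swap multiplies the product by $(2c'-1)/(2c-3)\geqslant 1$, and then recover the general inequality by choosing $(a,b,b',a')=(m'+n',\,\min\{m+n',m'+n\},\,\max\{m+n',m'+n\},\,m+n)$, which indeed satisfies the required ordering and sum condition. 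Both arguments rest on the same underlying fact --- the ratio $\nbt(k)/\nbt(k-1)=2k-3$ is increasing, i.e.\ $\nbt$ is log-convex --- but your exchange-step decomposition makes that mechanism more transparent and isolates the single inequality $2c'-1\geqslant 2c-3$ as the only thing to verify, at the cost of an extra (easy) reduction step; the paper's version is more direct but requires tracking two aligned products. One small point worth stating explicitly in your write-up: the step count $b-a$ (equivalently $\min\{m-m',n-n'\}$ in the reduction) must be a nonnegative integer for the chain to make sense; this is automatic here since all arguments of $\nbt$ are positive integers, and the paper's own telescoping product carries the same implicit assumption.
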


\begin{proof}
To establish the first claim, we may  assume $m\geqslant m'$ and $n\geqslant n'$, as the proof of the other case, $m \leqslant m'$ and $n\leqslant n'$, is similar. \tw{ Now Eqn.~(\ref{eq:tree:ineq}) holds because we have 
\begin{align}
\frac{\nbt(m+n)}{\nbt(m+n')} 
&=\frac{(2(m+n)-3)\cdot(2(m+n)-5)\cdots 3 \cdot 1}{(2(m+n')-3)\cdot(2(m+n')-5)\cdots 3 \cdot 1} \notag \\
&=(2m+2n-3)(2m+2n-5) \cdots (2m+2n'+1) (2m+2n'-1) \label{eq:ineq:nbt:nprime}\\
& \geqslant (2m'+2n-3)(2m'+2n-5) \cdots (2m'+2n'+1) (2m'+2n'-1) \label{eq:ineq:nbt}\\
&= \frac{\nbt(m'+n)}{\nbt(m'+n')}. \notag
\end{align}
Here Eq.~(\ref{eq:ineq:nbt:nprime}) follows from $n\geqslant n'$ and Eq.~(\ref{eq:ineq:nbt}) from
 $m\geqslant m'$.
 }% and $n\geqslant n'$.
 % we have $$ (2m+2n-3)(2m+2n-5) \cdots (2m+2n'-1) \geqslant (2m+2n'-3)(2m+2n'-5) \cdots (2m'+2n'-1), $$
%because each side of the above equation has $n-n'$ factors, and clearly each factor on the left-hand side is greater than or equal to the corresponding factor on the right-hand side. This leads to 
%$$ \frac{\nbt(m+n)}{\nbt(n'+m)}\geqslant \frac{\nbt(m'+n)}{\nbt(m'+n')}, $$ from which Eqn.~(\ref{eq:tree:ineq}) follows.

The second assertion follows from the first one by \tw{setting} $m'=n'=a/2$, $m=b-a/2$ and $n=b'-a/2$.
\hfill $\square$
\end{proof}

We end this section with the following result, which says that  the random variables $\ind_T(A)$ and $\ind_T(B)$  
are positively correlated when $A$ and $B$ are compatible, that is, $A\cap B\in \{\emptyset, A,B\}$.

\begin{theorem} \label{thm:positive-correlation}
Let $A$ and $B$ be two compatible non-empty subsets of $X$; then 
$$
\puni(A,B)\geqslant \puni(A)\puni(B).
$$ 
\end{theorem}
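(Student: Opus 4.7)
My plan is to split into cases according to the compatibility condition $A \cap B \in \{\emptyset, A, B\}$, apply the explicit formulae from Theorem~\ref{thm:pda-clade} and Theorem~\ref{thm:cor:PDA}, and then reduce the resulting inequality in each case to the ``equal-sum'' assertion of Lemma~\ref{lem:tree:ineq}.

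First I would dispose of the boundary subcases where the claim is either trivial or an equality. If $A = B$, then $\puni(A,B) = \puni(A)$ and the inequality reduces to $\puni(A) \geqslant \puni(A)^2$, which holds since $\puni(A) \leqslant 1$. If $A = X$ or $B = X$, then one of the factors on the right is $1$ and both sides coincide. So from now on I may assume $1 \leqslant a = |A|, \; b = |B| \leqslant n-1$ with either $A$ and $B$ disjoint, or $A \subsetneq B$ (the case $B \subsetneq A$ being symmetric), and in the nested case $a < b$.

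In the disjoint case, combining Theorem~\ref{thm:pda-clade}(i) and Theorem~\ref{thm:cor:PDA} and cancelling common $\varphi$-factors, the desired inequality $\puni(A,B) \geqslant \puni(A)\puni(B)$ becomes
\[
\varphi(n-a-b+2)\,\varphi(n) \;\geqslant\; \varphi(n-a+1)\,\varphi(n-b+1).
\]
In the nested case $A \subsetneq B$, the analogous reduction produces
\[
\varphi(b-a+1)\,\varphi(n) \;\geqslant\; \varphi(b)\,\varphi(n-a+1).
\]
In both displayed inequalities, the four arguments of $\varphi$ on the two sides have equal sums ($2n-a-b+2$ and $n+b-a+1$, respectively), and the left-hand arguments are easily checked (using $1 \leqslant a \leqslant b \leqslant n-1$, and $a+b \leqslant n$ in the disjoint case) to be the minimum and the maximum of the four values, while the right-hand arguments are the two middle values. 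Hence the second assertion of Lemma~\ref{lem:tree:ineq} applies and yields the required inequality.

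The only step that requires any care is this ordering verification, which is a short bookkeeping exercise in each case; the rest is a direct substitution of the formulae already established in Sections~\ref{subsec:pda-clade} and~\ref{sec:corr:clade:PDA}. I do not anticipate a serious obstacle, since the inequality has been engineered to match the equal-sum form of Lemma~\ref{lem:tree:ineq}.
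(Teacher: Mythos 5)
Your proposal is correct and follows essentially the same route as the paper's own proof: both cases (disjoint and nested) are handled by substituting the formulae of Theorems~\ref{thm:pda-clade} and~\ref{thm:cor:PDA} and reducing to the equal-sum inequality in the second assertion of Lemma~\ref{lem:tree:ineq}, with the paper's chains $n-a-b+2\leqslant n-b+1\leqslant n-a+1\leqslant n$ and $b-a+1\leqslant b,\,n-a+1\leqslant n$ being exactly the ordering verification you describe. Your explicit treatment of the boundary subcases is harmless but unnecessary, since the formulae already cover them via $\varphi(1)=1$.
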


\begin{proof}
\tw{
Set $a=|A|$ and $b=|B|$. By symmetry we may assume without loss of generality that $a\leqslant b$ holds. Since $A$ and $B$ are compatible, we have either $A\cap B=\emptyset$ or $A\subseteq B$. %one is contained in the others.
}

\tw{
Since $n-a-b+2\leqslant n-b+1\leqslant n-a+1 \leqslant n$, by Lemma~\ref{lem:tree:ineq} we have
 $$
 \nbt(n)\nbt(n-a-b+2)\geqslant \nbt(n-b+1)\nbt(n-a+1),
 $$
 and hence
\[
	\frac{\nbt(a)\nbt(b)\nbt(n-a-b+2)}{\nbt(n)}\geqslant 	\frac{\nbt(b)\nbt(n-b+1)}{\nbt(n)}\frac{\nbt(a)\nbt(n-a+1)}{\nbt(n)}.
\]
Together with Theorem~\ref{thm:cor:PDA}, this shows that the theorem holds for the case $A\cap B=\emptyset$.}

\tw{
On the other hand, noting that $b-a+1\leqslant b \leqslant n$ and $b-a+1\leqslant n-a+1 \leqslant n$ holds, by Lemma~\ref{lem:tree:ineq} we have
$$	\nbt(n)\nbt(b-a+1)\geqslant\nbt(b)\nbt(n-a+1), 
$$
and hence
\[
	\frac{\nbt(a)\nbt(b-a+1)}{\nbt(b)}\frac{\nbt(b)\nbt(n-b+1)}{\nbt(n)} \geqslant 		\frac{\nbt(b)\nbt(n-b+1)}{\nbt(n)}\frac{\nbt(a)\nbt(n-a+1)}{\nbt(n)}.
\]
Together with Theorem~\ref{thm:cor:PDA}, this shows that the theorem holds for the case $A\subseteq B$, as required.}
\hfill $\square$
\end{proof}

\section{Clan probabilities}
In this section, we study clan probabilities, the \tw{counterpart of clade probabilities for unrooted trees}. To this end, given a subset $A\subseteq X$ and an unrooted tree $T^*\in \utsp_X$, let $\ind_{T^*}(A)$ be the indicator function defined as
 \[
	\ind_{T^*}(A) = \begin{cases} 1, &\text{if $A$ is a clan of $T^*$,}\\ 
						0, &\text{otherwise.} \end{cases}
\]
Then the probability that clan $A$ is contained in a random unrooted tree sampled according to $\pp_u$ is 
$$
\pp_u(A)=\sum_{T^* \in \utsp_X} \pp_u(T^*) \ind_{T^*}(A).
$$
Note that the the clan probability defined as above can be extended to a collection of subsets in a natural way, that is, we have
\begin{equation*}
% \label{eq:mulclan-prob}
	\pp_u(A_1, \dotsc, A_m) =  \sum_{T^* \in \utsp_X} \pp_u(T^*) \big(\ind_{T^*}(A_1) \dotsb \ind_{T^*}(A_m)\big).
\end{equation*}

As a generalization of Lemma~6.1 in~\citet{zhu11a}, the following technical result relates clan probabilities to clade probabilities. 
\begin{lemma}
\label{lem:clan:clade}
Suppose that $\pp$ is \tw{a} probability measure on $\tsp_X$ and $\pp_u$ is the probability measure on $\utsp_X$ induced by $\pp$. Then for a nonempty subset $A\subset X$, we have
\[
\pp_u(A)=\pp(A)+\pp(X \setminus A)-\pp(A,X  \setminus A).
\]
\end{lemma}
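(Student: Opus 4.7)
The plan is to rewrite $\pp_u(A)$ as an average over rooted trees via the identification $\pp_u(T^*) = \sum_{T \in \rho(T^*)}\pp(T)$, and then express the clan indicator $\ind_{T^*}(A)$ in terms of the clade indicators $\ind_T(A)$ and $\ind_T(X\setminus A)$ via a simple inclusion--exclusion argument at the level of individual trees.

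\textbf{Step 1 (unrooted expectation as rooted expectation).} By Eq.~\eqref{eq:unrooted-yule-prob} applied to $\pp_u$, and by partitioning $\tsp_X$ according to the map $\rho^{-1}$, I would first observe that
\begin{equation*}
\pp_u(A) \;=\; \sum_{T^*\in \utsp_X} \pp_u(T^*)\, \ind_{T^*}(A) \;=\; \sum_{T\in \tsp_X} \pp(T)\, \ind_{\rho^{-1}(T)}(A).
\end{equation*}
This reduces the problem to relating $\ind_{\rho^{-1}(T)}(A)$ to clade indicators of $T$ for each individual $T$.

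\textbf{Step 2 (clans of $\rho^{-1}(T)$ vs.\ clades of $T$).} Next, for a fixed $T\in\tsp_X$, I would establish the set-theoretic identity
\[
\{\text{clans of } \rho^{-1}(T)\}\;=\;\{C : C \text{ is a non-trivial clade of } T\}\;\cup\;\{X\setminus C : C \text{ is a non-trivial clade of } T\}.
\]
This is obtained from the observation that each edge $e$ of $T$ induces the split $(C_e \mid X\setminus C_e)$ where $C_e$ is the set of leaves below the lower endpoint of $e$, and that suppressing $\rho$ merges the two edges incident with the root into a single edge of $\rho^{-1}(T)$ whose induced split is precisely $(C_1 \mid C_2)$ where $C_1, C_2$ are the clades beneath the two children of $\rho$ (which are complementary in $X$). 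Since clans of an unrooted tree are exactly the sides of its edge-induced splits, for any nonempty $A\subsetneq X$ this yields
\[
\ind_{\rho^{-1}(T)}(A)\;=\;\ind_T(A)\,+\,\ind_T(X\setminus A)\,-\,\ind_T(A)\,\ind_T(X\setminus A),
\]
where the subtracted term accounts for the case that both $A$ and $X\setminus A$ are clades of $T$ (equivalently, that $A$ and $X\setminus A$ are the root-children clades).

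\textbf{Step 3 (assemble).} Finally I would substitute this identity into the sum from Step 1, distribute the sum over the three terms, and invoke the definitions in Eqs.~\eqref{eq:clade-prob} and~\eqref{eq:mulclade-prob} to recognise the three summands as $\pp(A)$, $\pp(X\setminus A)$ and $\pp(A, X\setminus A)$ respectively. This yields the claimed identity.

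\textbf{Main obstacle.} The only non-trivial step is Step 2, namely pinning down precisely which subsets of $X$ arise as clans of $\rho^{-1}(T)$ in terms of clades of $T$; the rest is bookkeeping. The subtlety is that the two edges of $T$ incident with $\rho$ become a single edge after suppression, so one must check that no clan is double-counted and that the root-children clades contribute the ``overlap'' term in the inclusion--exclusion. Once this correspondence is established cleanly, the remainder is just linearity of expectation.
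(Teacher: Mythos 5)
Your proof is correct and follows essentially the same route as the paper, which simply cites the fact that $A$ is a clan of $\rho^{-1}(T)$ if and only if $A$ or $X\setminus A$ is a clade of $T$ and then applies inclusion--exclusion; you merely spell out the averaging over rooted trees and sketch why that fact holds. One small imprecision in Step 2: the set identity should range over all clades $C$ of $T$ with $C\neq X$ (singletons included), not just the non-trivial ones, since every pendant edge of $\rho^{-1}(T)$ makes the corresponding singleton a clan; your subsequent indicator identity is nevertheless correct as stated, because singletons are always clades of $T$.
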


\begin{proof}
\tw{It is well-known (see, e.g., Lemma~6.1 in~\citet{zhu11a}) that for a rooted binary tree $T$, a set $A$ is a clan of $\rho^{-1}(T)$ if and only if either $A$ is a clade of $T$ or $X  \setminus A$ is a clade of $T$. Now the lemma follows from the definitions and the inclusion-exclusion principle.}
\hfill $\square$
\end{proof}

\bigskip
Now we proceed to studying the clan probabilities under the YHK and PDA models. To begin with, recall that the probabilities of an unrooted tree $T^* \in \utsp_X$ under the YHK and PDA models are
\begin{align*}
	\puy(T^*) = \sum_{T \in \rho(T^*)} \pyule(T)~~\text{and}~~
	\puu(T^*) = \sum_{T \in \rho(T^*)} \puni(T), % = \frac{1}{\varphi(|X|-1)},
\end{align*}
where $\rho(T^*)$ denotes the set of rooted trees $T$ in $\tsp_X$ with $T^*=\rho^{-1}(T)$.  

By the definition of clan probabilities, we have
\begin{align*}
	\puy(A) &= \sum_{T^* \in \utsp_X} \puy(T^*) \ind_{T^*}(A),~~\text{and}\\
	\puu(A) &= \sum_{T^* \in \utsp_X} \puu(T^*) \ind_{T^*}(A). \notag
\end{align*}
It can be verified, as with the case of clade probabilities, that the \tw{exchangeability property} of $\puy$ and $\puu$ implies that both $\puy(A)$ and $\puu(A)$ depend only on the size $a= |A|$, not on the particular elements in $A$.  Therefore, we will denote them as $p_n^*(a)$ and $q_n^*(a)$, respectively.  

By Lemma~\ref{lem:clan:clade}, we can derive the following formulae to calculate clan probabilities under the two models, \tw{the first of which is established in~\citet{zhu11a}. Note that the second formula reveals an interesting relationship between clan probability and clade probability under the PDA model. Intuitively, it is related to the observation that 
there exists a bijective mapping from $\tsp_X$ to $\utsp_{Y}$ with $Y=X\cup \{y\}$ for some $y \not \in X$ that maps each rooted tree $T$ in $\tsp_X$ 
to the unique tree in $\utsp_Y$ obtained from $T$ by adding the leaf $y$ to the root of $T$.
}

\begin{theorem}
\label{thm:prob:clan}
For $1\leqslant a <n$, we have
\begin{align}
 p^*_n(a) &= 2n\Big[  \frac{1}{a(a+1)}+\frac{1}{(n-a)(n-a+1)}-\frac{1}{(n-1)n} \Big] {n \choose a}^{-1}; \label{eq:clan:yule}\\
q^*_n(a) &= \frac{\nbt(a)\nbt(n-a+1)+\nbt(n-a)\nbt(a+1)-\nbt(a)\nbt(n-a)}{\nbt(n)}\\
	&=\frac{\nbt(a)\nbt(n-a)}{\nbt(n-1)} = \rev{q_{n-1}(a)}\notag. \label{eq:clan:pda}
\end{align}
\end{theorem}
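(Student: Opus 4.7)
The plan is to reduce both formulas to the clade-probability formulas already in hand (Theorems~\ref{thm:yule-clade} and~\ref{thm:pda-clade}) via Lemma~\ref{lem:clan:clade}. By the exchangeability property (already used to define $p_n^*(a)$ and $q_n^*(a)$), fixing any specific $A \subset X$ with $|A|=a$, Lemma~\ref{lem:clan:clade} yields
\[
	p_n^*(a) = p_n(a) + p_n(n-a) - p_n(a,n-a),
	\qquad
	q_n^*(a) = q_n(a) + q_n(n-a) - q_n(a,n-a).
\]
So the whole proof reduces to algebraic simplification of the right-hand sides.

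For the YHK identity I would substitute the two formulas from Theorem~\ref{thm:yule-clade}, using that $\binom{n}{n-a} = \binom{n}{a}$ to pull a common factor of $\binom{n}{a}^{-1}$ out front:
\[
	p_n^*(a) = \binom{n}{a}^{-1}\!\left[\frac{2n}{a(a+1)} + \frac{2n}{(n-a)(n-a+1)} - \frac{2}{n-1}\right],
\]
which rearranges directly into Eq.~\eqref{eq:clan:yule} after pulling $2n$ out as well. This part is purely bookkeeping.

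For the PDA identity I would substitute the formulas from Theorem~\ref{thm:pda-clade} and factor out $\varphi(a)\varphi(n-a)$ from the numerator, obtaining
\[
	q_n^*(a) = \frac{\varphi(a)\varphi(n-a)}{\varphi(n)}\!\left[\frac{\varphi(n-a+1)}{\varphi(n-a)} + \frac{\varphi(a+1)}{\varphi(a)} - 1\right].
\]
The main trick here is the double-factorial identity $\varphi(m+1)/\varphi(m) = 2m-1$, which turns the bracket into $(2n-2a-1) + (2a-1) - 1 = 2n-3$. Since $\varphi(n) = (2n-3)\varphi(n-1)$, the factor $2n-3$ cancels and we are left with $\varphi(a)\varphi(n-a)/\varphi(n-1)$, which is exactly $q_{n-1}(a)$ by Theorem~\ref{thm:pda-clade}(i) applied with $n$ replaced by $n-1$. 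This gives Eq.~\eqref{eq:clan:pda}.

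I do not foresee a genuine obstacle: the only mildly delicate step is keeping the shifts of the double-factorial indices straight in the PDA computation, which is why the telescoping $(2n-2a-1)+(2a-1)-1 = 2n-3$ is what drives the unexpected identification $q_n^*(a) = q_{n-1}(a)$. The intuition noted in the surrounding text — that attaching an extra leaf at the root yields a bijection $\tsp_X \to \utsp_{X\cup\{y\}}$ — is consistent with this identity, but in the proof itself it seems cleanest to just execute the algebra rather than formalize the bijection.
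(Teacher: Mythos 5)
Your proposal is correct and takes essentially the same route as the paper: both rest on Lemma~\ref{lem:clan:clade} combined with the clade formulas of Theorems~\ref{thm:yule-clade} and~\ref{thm:pda-clade}, and both reduce the PDA expression via the identity $\nbt(m+1)=(2m-1)\nbt(m)$ to obtain $q_n^*(a)=\nbt(a)\nbt(n-a)/\nbt(n-1)=q_{n-1}(a)$. The only difference is cosmetic: the paper cites \citet{zhu11a} for the YHK formula rather than carrying out the (routine) substitution you supply, and it verifies the PDA simplification by cross-multiplying by $\nbt(n-1)$ instead of factoring out $\nbt(a)\nbt(n-a)$.
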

 
 \begin{proof}
Since the first equation is established in~\citet{zhu11a}, it remains to show the second one.  The first equality follows from Lemma~\ref{lem:clan:clade} and Theorem~\ref{thm:pda-clade}. To establish the second equality, it suffices to see that
\begin{align*}
  \nbt(n-1) &[\nbt(a)\nbt(n-a+1)+\nbt(n-a)\nbt(a+1)]\\
  &=\nbt(n-1)\nbt(a)\nbt(n-a) [(2n-2a-1)+(2a-1)]\\ 
  &=\nbt(n-1)(2n-2)\nbt(a)\nbt(n-a)\\
&=(\nbt(n)+\nbt(n-1))\nbt(a)\nbt(n-a). 
\end{align*}
\hfill $\square$
\end{proof}

Recall that in Theorem~\ref{thm:yhk:convex} and~\ref{thm:pda:convex} we show that  the sequence $\{p_n(a)\}_{1\leqslant a < n}$ and
 $\{q_n(a)\}_{1\leqslant a < n}$ are log-convex. The theorem below establishes a similar result for clan probabilities. 

\begin{theorem}
\label{thm:clan:convex}
For $n\geqslant 3$, the sequence $\{p^*_n(a)\}_{1\leqslant a < n}$ and
 $\{q^*_n(a)\}_{1\leqslant a < n}$ are log-convex. Moreover, 
we have 
\begin{enumerate}
	\item[{\rm (i)}] $p^*_n(a)=p^*_n(n-a)$ and $q^*_n(a)=q^*_n(n-a)$ for $1\leqslant a < n$.
	\item[{\rm (ii)}]  $q^*_n(a+1) \leqslant q^*_n(a)$ when $a \geqslant \lfloor (n-1)/2 \rfloor - 1$, and $q^*_n(a+1) \geqslant q^*_n(a)$ when $a \leqslant \lceil (n-1)/2 \rceil$.
\end{enumerate}
\end{theorem}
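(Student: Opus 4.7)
The plan is to handle each assertion using the explicit formulas from Theorem~\ref{thm:prob:clan} together with Lemma~\ref{lem:log-convex}.

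\textbf{Part (i) and log-convexity of $\{q_n^*(a)\}$.} The symmetries are immediate from the closed-form expressions: $q_n^*(a) = \varphi(a)\varphi(n-a)/\varphi(n-1)$ is manifestly invariant under $a \mapsto n-a$, and the bracketed term in~\eqref{eq:clan:yule} together with $\binom{n}{a} = \binom{n}{n-a}$ yields the corresponding fact for $p_n^*$. Conceptually both are instances of Lemma~\ref{lem:clan:clade}, whose right-hand side $\pp(A) + \pp(X\setminus A) - \pp(A, X\setminus A)$ is symmetric in the pair $(A, X\setminus A)$. For log-convexity of $\{q_n^*(a)\}$, I would use the same factorization: the ratio $\varphi(a+1)/\varphi(a) = 2a-1$ is strictly increasing, so $\{\varphi(a)\}$ is log-convex and so is $\{\varphi(n-a)\}$ as a function of $a$. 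Lemma~\ref{lem:log-convex} then gives log-convexity of the product, and hence of $\{q_n^*(a)\}$ after dividing by the positive constant $\varphi(n-1)$.

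\textbf{Log-convexity of $\{p_n^*(a)\}$.} This is the main technical step. Writing $p_n^*(a) = g(a)/\binom{n}{a}$ with $g(a) = \frac{2n}{a(a+1)} + \frac{2n}{(n-a)(n-a+1)} - \frac{2}{n-1}$, and observing that $\{1/\binom{n}{a}\}$ is log-convex (reciprocal of the log-concave binomial sequence), by Lemma~\ref{lem:log-convex} it suffices to show $\{g(a)\}_{1 \leqslant a \leqslant n-1}$ is positive and log-convex. Positivity follows from the bounds $a(a+1) \leqslant n(n-1)$ and $(n-a)(n-a+1) \leqslant n(n-1)$, which already give $g(a) \geqslant 2/(n-1) > 0$. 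Log-convexity is the obstacle, since $g$ involves subtracting a constant, and constant subtraction can in principle destroy log-convexity. I would clear denominators to rewrite $g(a) = 2 P(a) / [(n-1) a(a+1) (n-a) (n-a+1)]$ with $P(a) = n^2(n-1)^2 - (n+a)(n-a-1)(2n-a)(a-1)$ (using $n(n-1) - a(a+1) = (n+a)(n-a-1)$ and its $a \leftrightarrow n-a$ counterpart), and then verify the polynomial inequality $g(a-1)g(a+1) \geqslant g(a)^2$ by direct algebraic manipulation. This computation is routine but tedious and constitutes the main difficulty of the theorem.

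\textbf{Part (ii).} The formula $q_n^*(a) = \varphi(a)\varphi(n-a)/\varphi(n-1)$ together with $\varphi(m+1)/\varphi(m) = 2m-1$ yields the exact ratio $q_n^*(a+1)/q_n^*(a) = (2a-1)/(2n-2a-3)$. This ratio equals $1$ precisely at $a = (n-1)/2$, is strictly less than $1$ for $a < (n-1)/2$, and strictly greater than $1$ for $a > (n-1)/2$. A short case analysis on the parity of $n$, comparing the real threshold $(n-1)/2$ to $\lfloor (n-1)/2 \rfloor$ and $\lceil (n-1)/2 \rceil$, then pins down the integer regions in which $\{q_n^*(a)\}$ decreases and increases, completing the proof.
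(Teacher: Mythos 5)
Your treatment of part (i), of the log-convexity of $\{q^*_n(a)\}$, and of part (ii) is correct and essentially coincides with the paper's: the paper obtains all the $q^*$ statements by noting $q^*_n(a)=q_{n-1}(a)$ and citing Theorem~\ref{thm:pda:convex}, while you redo the (identical) ratio computation $\varphi(a+1)/\varphi(a)=2a-1$ directly; either route is fine. One caveat on part (ii): your ratio $(2a-1)/(2n-2a-3)$ shows that $\{q^*_n(a)\}$ \emph{decreases} for $a\leqslant (n-1)/2$ and \emph{increases} for $a\geqslant (n-1)/2$, which is consistent with unimodality and with $q^*_n=q_{n-1}$, whereas the displayed statement of (ii) has the two inequality signs the other way around. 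That is evidently a typo in the theorem as printed (it fails already at $n=6$, where the sequence is $1,\tfrac17,\tfrac3{35},\tfrac17,1$), so you should state explicitly that what your computation establishes is the corrected version.

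The genuine gap is the log-convexity of $\{p^*_n(a)\}$, which you rightly call the main difficulty and then do not prove. Reducing via Lemma~\ref{lem:log-convex} to the positivity and log-convexity of $g(a)=\frac{2n}{a(a+1)}+\frac{2n}{(n-a)(n-a+1)}-\frac{2}{n-1}$ is the right first move, and your positivity bound is fine, but ``verify $g(a-1)g(a+1)\geqslant g(a)^2$ by direct algebraic manipulation'' is not a proof: after clearing denominators this is a high-degree polynomial inequality in the two parameters $a$ and $n$, and whether it is ``routine'' is precisely what is at stake. A cleaner way to finish along your own lines: since $\bigl\{1/\bigl(a(a+1)(n-a)(n-a+1)\bigr)\bigr\}$ is log-convex (product of a log-convex sequence and its reversal, Lemma~\ref{lem:log-convex}), it suffices to show that your quartic numerator $P(a)$ is itself positive and log-convex, a much smaller computation --- but it still has to be carried out. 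For comparison, the paper avoids brute force by splitting the bracket as $\bigl(\frac{1}{a(a+1)}-\frac{1}{n(n-1)}\bigr)+\frac{1}{(n-a)(n-a+1)}$ and invoking closure of log-convexity under addition; be aware, however, that the first summand equals $0$ at $a=n-1$ and is not a positive log-convex sequence on the whole range (for $n=6$ its values at $a=3,4,5$ are $\tfrac1{20},\tfrac1{60},0$), so the paper's own justification of this step is also defective. In short, every route to this part requires a genuine argument that your proposal, as written, does not supply.
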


\begin{proof}
Part (i) follows from Theorem~\ref{thm:prob:clan}. Since $q_n^*(a) = q_{n-1}(a)$ by Theorem~\ref{thm:prob:clan},  Part (ii) and that  $\{q^*_n(a)\}_{1\leqslant a <n}$ is log-convex follow from Theorem~\ref{thm:pda:convex}. 

It remains to show that $\{p^*_n(a)\}_{1\leqslant a <n}$ is log-convex.
To this end, fix a number $n\geqslant 3$, and let $y_a=\frac{1}{a(a+1)}$ for $1\leqslant a <n$. Then clearly $\{y_a\}_{1\leqslant a <n}$ is log-convex. This implies $\{y'_a\}_{1\leqslant a <n}$ with $y'_a=y_{n-a}$ is also log-convex. In addition, since $2y_a\geqslant y_{a+1}+y_{a-1}$ for $2\leqslant a \leqslant n-2$, $\{y^*_a\}_{1\leqslant a <n}$ with $y^*_a=y_a-\frac{1}{n(n-1)}$ is log-convex as well.
By Lemma~\ref{lem:log-convex}, we know $\{y'_a+y^*_a\}_{1\leqslant a <n}$ is log-convex. As $\{{n \choose a}^{-1}\}_{1\leqslant a <n}$ is log-convex, by  Lemma~\ref{lem:log-convex} and Theorem~\ref{thm:prob:clan} we conclude that $\{p^*_n(a)\}_{1\leqslant a <n}$ is log-convex, as required.
\hfill $\square$
\end{proof}

\begin{figure}
	\centering
	\includegraphics[scale=0.7]{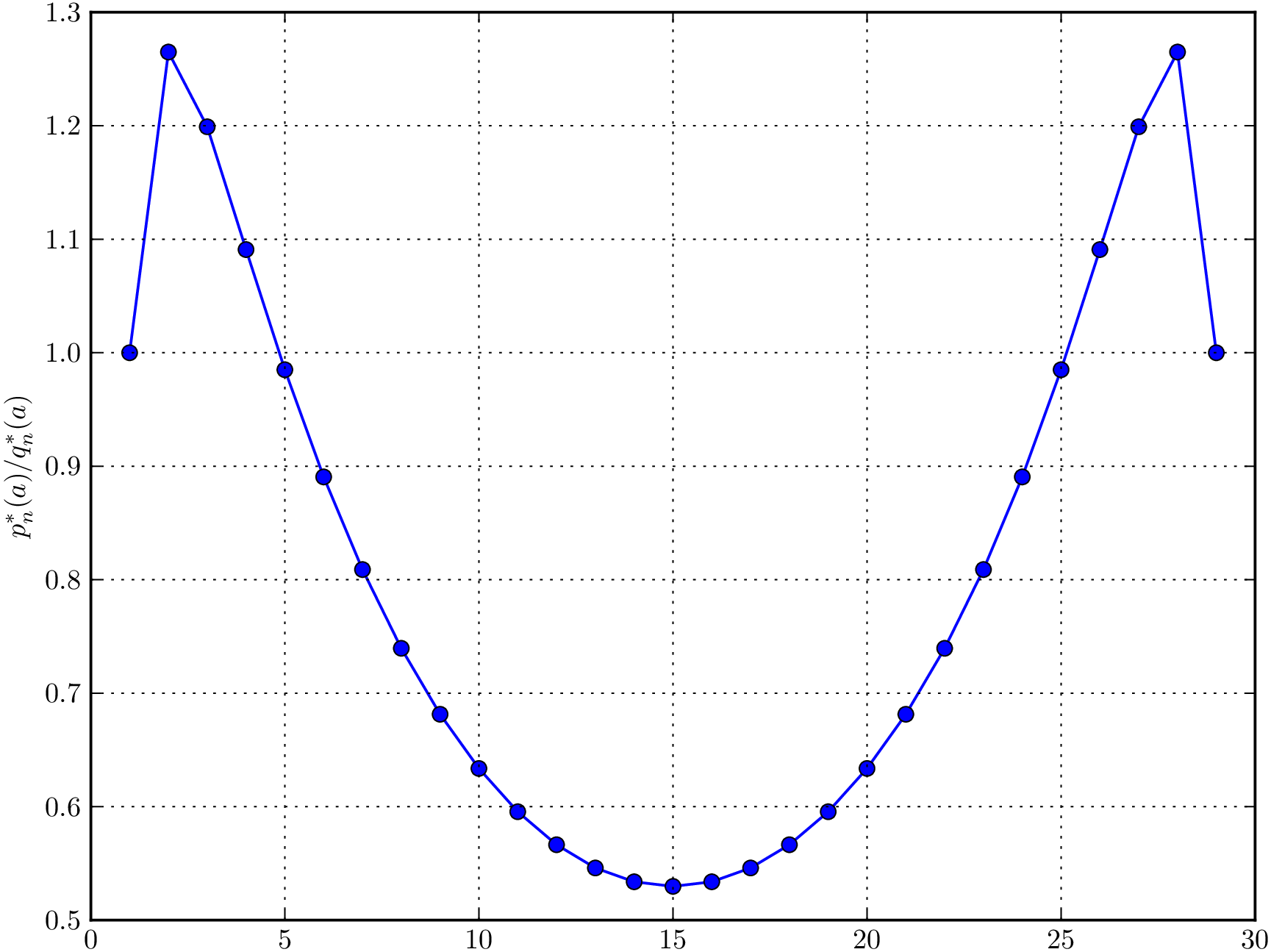}
	\caption{Plot of the ratio $p^*_n(a) / q^*_n(a)$ with $n=30$ and $a=1,\dotsc,29$.}
	\label{fig:pq:star}
\end{figure}

Next, we consider the relationships between clan probabilities under the two models. For instance, consider the ratio of $p^*_n(a)/q^*_n(a)$ with $n=30$ (see Figure~\ref{fig:pq:star}. Then the ratios are symmetric  about $a=15$, which is consistent with Part(i) in Theorem~\ref{thm:clan:convex}. In addition, by the figure it is clear that, \tw{except for $a=1$ for which  $p^*_n(a) = q^*_n(a) = 1$}, the ratio is strictly decreasing on $[2,\lfloor n/2 \rfloor]$ and is \tw{less than} $1$ when $a$ is greater than a critical value. 
We shall show this observation holds for general $n$. To this end, we need the following technical lemma.

\begin{lemma}
\label{lem:clan:comp:bound}
For $n> 5$, we have $p^*_n(\lfloor n/2 \rfloor) < q^*_n(\lfloor n/2 \rfloor)$.
\end{lemma}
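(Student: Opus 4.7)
The plan is to substitute $a = \lfloor n/2 \rfloor$ directly into the formulas from Theorem~\ref{thm:prob:clan} and reduce the desired inequality to a concrete algebraic one involving binomial coefficients, then dispose of it by checking a base case and verifying monotonicity.

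First I would split on the parity of $n$. For $n = 2k$ with $k \geqslant 3$, the middle term simplifies particularly nicely because both $\frac{1}{a(a+1)}$ and $\frac{1}{(n-a)(n-a+1)}$ coincide when $a = k$; after clearing denominators one obtains
\[
p^*_{2k}(k) = \frac{2(7k-5)}{(k+1)(2k-1)}\binom{2k}{k}^{-1},
\]
while Theorem~\ref{thm:prob:clan} gives $q^*_{2k}(k) = q_{2k-1}(k) = \binom{2k-2}{k-1}\binom{4k-4}{2k-2}^{-1}$. For $n = 2k+1$ odd, the analogous substitution $a = k$ yields a similar but slightly asymmetric pair of closed forms, again expressible in binomial coefficients.

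Next, in each parity class I would rewrite the inequality $p^*_n(\lfloor n/2 \rfloor) < q^*_n(\lfloor n/2 \rfloor)$ as a ratio of products of small factorials, after cancellation between $\binom{n}{\lfloor n/2 \rfloor}$ and the central binomial $\binom{2n-4}{2n-4 \choose \lfloor n/2 \rfloor - 1}$. The resulting inequality takes the form $P(n) \cdot R(n) < 1$ where $P(n)$ is a rational function in $n$ and $R(n)$ is a ratio of central binomials (effectively capturing $\binom{2k-2}{k-1}^2/\binom{4k-4}{2k-2}$ in the even case).

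The main step then is to show this ratio is less than $1$ for all $n > 5$. I would carry this out by verifying the base cases $n = 6$ and $n = 7$ by direct numerical substitution, and then showing that as $n$ increases by $2$ the left-hand side decreases; concretely, by taking the quotient of successive values and showing it is bounded above by $1$ (this is straightforward because the central binomial ratio grows like $4$ while the polynomial prefactor shrinks faster). The main obstacle I anticipate is bookkeeping: the explicit formulas are clean but the algebra after combining $p^*_n$ and $q^*_n$ is fiddly, and in the odd case the asymmetry $n-a \neq a$ means one does not get a clean symmetric form. A safer route to avoid heavy computation is to verify the two base cases $n = 6, 7$ by hand (using the given formulas) and then argue inductively that the ratio $p^*_n(\lfloor n/2\rfloor)/q^*_n(\lfloor n/2\rfloor)$ is non-increasing in $n$, relying on the log-convexity results in Theorems~\ref{thm:yhk:convex} and~\ref{thm:clan:convex} to control how the numerators and denominators behave as $n$ grows.
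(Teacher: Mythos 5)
Your main route is essentially the paper's proof: split on the parity of $n$, substitute $a=\lfloor n/2\rfloor$ into Theorem~\ref{thm:prob:clan} to obtain the same closed forms (e.g.\ $p^*_{2k}(k)=\frac{2(7k-5)}{(k+1)(2k-1)}\binom{2k}{k}^{-1}$ and $q^*_{2k}(k)=\binom{2k-2}{k-1}\binom{4k-4}{2k-2}^{-1}$), verify the base cases $n=6,7$, and then show the ratio is monotone by bounding the quotient of successive values, which the paper carries out as an explicit polynomial inequality in $k$. One caveat: your proposed ``safer'' fallback of deducing monotonicity in $n$ from Theorems~\ref{thm:yhk:convex} and~\ref{thm:clan:convex} would not work, since those log-convexity statements are in the clade/clan size $a$ for fixed $n$, not in $n$, so the explicit successive-quotient computation (your primary plan, and the paper's) is genuinely needed.
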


\begin{proof}
For simplicity, let $k = \lfloor n/2 \rfloor$.
To establish the lemma, we consider the following two cases.

The first case is when $n$ is even, that is, $n=2k$. Then we have
\begin{align*}
 p_{2k}^*(k) &=4k\Big(\frac{2}{k(k+1)}-\frac{1}{2k(2k-1)}\Big) {2k \choose k}^{-1}\\
 &=\Big( \frac{8}{k+1}-\frac{2}{2k-1} \Big)  {2k \choose k}^{-1}
 =\frac{2(7k-5)}{(k+1)(2k-1)} {2k \choose k}^{-1},
\end{align*}
and %hence $ p_{2k}^*(k)<q_{2k}^*(k)$ by noting that
\begin{align*}
\alpha(k):=\frac{q_{2k}^*(k)}{p_{2k}^*(k)}&=\frac{\nbt(k)\nbt(k)}{\nbt(2k-1)} {2k\choose k} \frac{(k+1)(2k-1)}{2(7k-5)} \\
 &= \frac{(2k-2)!(2k-2)!(2k-2)!(2k)!}{(4k-4)!(k-1)!(k-1)!k!k!}\frac{(k+1)(2k-1)}{2(7k-5)}.
%  &=& \frac{(2k-2)!(2k-2)!(2k-2)!(2k-1)!}{(4k-4)!(k-1)!(k-1)!(k-1)!k!}\frac{k+1}{4}\\
\end{align*}
 Note that  $\alpha(3)=\frac{15}{14}>1$, and $\alpha(k)$ is increasing for $k\geqslant 3$, because 
 \begin{eqnarray*}
 \frac{\alpha(k+1)}{\alpha(k)} &=& \frac{2(2k-1)(2k+1)^2(k+2)(7k-5)}{(4k-1)(4k-3)(k+1)^2(7k+2)} \\
& =& \frac{112k^5+200k^4-116k^3-130k^2+22k+20}
{112k^5+144k^4-59k^3-96k^2+k+6}  \\
&>& 1,
 \end{eqnarray*}
  holds for $k\geqslant 3$.   In other words, for $k\geqslant 3$, we have $\alpha(k)>1$ and hence also ${q_{2k}^*(k)}>{p_{2k}^*(k)}$.

The second case is when $n$ is odd, that is, $n=2k+1$. Then we have
  \begin{eqnarray*}
 p_{2k+1}^*(k)&=&(4k+2)\Big(\frac{1}{k(k+1)}+\frac{1}{(k+1)(k+2)}-\frac{1}{2k(2k+1)}\Big) {2k+1 \choose k}^{-1}\\
& =& \frac{7k+2}{k(k+2)}  {2k+1 \choose k}^{-1},
 \end{eqnarray*}
  and 
 \begin{eqnarray*}
\beta(k):=\frac{q_{2k+1}^*(k)}{p_{2k+1}^*(k)}&=& \frac{\nbt(k)\nbt(k+1)}{\nbt(2k)} {2k+1\choose k}\frac{k(k+2)}{7k+2} \\
 &=& \frac{(2k-2)!(2k-1)!(2k)!(2k+1)!(k+2)}{(4k-2)!(k-1)!(k-1)!k!(k+1)!(7k+2)}.
 \end{eqnarray*}
  Now we have $\beta(3)=25/23>1$. In addition, $\beta(k)$ is increasing for $k\geqslant 3$ by noting that
  \begin{eqnarray*}
\frac{ \beta(k+1)}{\beta(k)} &=& 
\frac{(2k-1)(2k+1)(2k+2)(2k+3)(k+3)(7k+2)}{(k+2)^2(4k+1)(4k-1)k(7k+9)} \\
&=& \frac{112k^6+648k^5+1156k^4+630k^3-152k^2-198k-36}
{112k^6+592k^5+1017k^4+539k^3-64k^2-36k}\\
&\geqslant& 1
  \end{eqnarray*}
  holds for $k\geqslant 3$.   In other words, for $k\geqslant 3$ and $n$ being odd, we also have $\beta(k)>1$ and hence also ${q_{2k+1}^*(k)}>{p_{2k+1}^*(k)}$. This completes the proof.
\hfill $\square$
\end{proof}

Parallel to Theorem~\ref{thm:clade:comp} which compares $p_n(a)$ and $q_n(a)$, the following theorem provides a comparison between $p^*_n(a)$ and $q^*_n(a)$.

\begin{theorem}
\label{thm:clan:comp}
For $n> 5$, there exists a number $\kappa^*(n)$ in $(1,\lfloor n/2 \rfloor)$, such that $p^*_n(a)>q^*_n(a)$ for $2\leqslant a \leqslant \kappa^*(n)$, and $p^*_n(a)<q^*_n(a)$ for $\kappa^*(n)<a \leqslant \lfloor n/2 \rfloor$.
\end{theorem}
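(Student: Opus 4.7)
The plan is to mirror the strategy used in the proof of Theorem~\ref{thm:clade:comp}. I define $g^*_n(a) := p^*_n(a)/q^*_n(a)$ and aim to establish three facts: (I) $g^*_n(\cdot)$ is strictly decreasing on $\{2, 3, \ldots, \lfloor n/2 \rfloor\}$; (II) $g^*_n(2) > 1$ for $n > 5$; and (III) $g^*_n(\lfloor n/2 \rfloor) < 1$. Together these imply a unique value $\kappa^*(n) \in (2, \lfloor n/2 \rfloor)$ at which the ratio crosses $1$, which in turn yields the theorem via a discrete intermediate-value argument.

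Fact (III) is immediate from Lemma~\ref{lem:clan:comp:bound}. For Fact (II), I plan to use the identity $q^*_n(a) = q_{n-1}(a)$ from Theorem~\ref{thm:prob:clan} together with Theorem~\ref{thm:pda-clade} to obtain $q^*_n(2) = 1/(2n-5)$, and then substitute $a=2$ into Eq.~\eqref{eq:clan:yule} to derive a closed form for $p^*_n(2)$ as a rational function of $n$. The inequality $p^*_n(2) > q^*_n(2)$ then reduces to a simple polynomial inequality in $n$ that is straightforward to verify for $n \geqslant 6$.

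Fact (I) is the main technical obstacle. I plan to compute the ratio
\[
\frac{g^*_n(a+1)}{g^*_n(a)} = \frac{p^*_n(a+1)/p^*_n(a)}{q^*_n(a+1)/q^*_n(a)}
\]
explicitly and show it is strictly less than $1$ for $2 \leqslant a \leqslant \lfloor n/2 \rfloor - 1$. The denominator simplifies to $(2a-1)/(2n-2a-3)$ via Theorem~\ref{thm:pda:convex} applied to $q_{n-1}$, while Eq.~\eqref{eq:clan:yule} together with $\binom{n}{a}/\binom{n}{a+1} = (a+1)/(n-a)$ gives a closed-form expression for $p^*_n(a+1)/p^*_n(a)$ in which the bracketed factors are three-term sums. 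After clearing denominators, the required inequality becomes a polynomial comparison in $a$ and $n$; in contrast to the analogous step for Theorem~\ref{thm:clade:comp} (whose brackets each contain a single term), the resulting polynomial is of higher degree, and the algebra will be correspondingly heavier. I expect to handle it by collecting terms carefully and factoring out $(n-2a)$ or $(n-2a-1)$ to exploit the constraint $a < n/2$, leaving a residual factor whose positivity on the range $2 \leqslant a \leqslant \lfloor n/2 \rfloor - 1$ and $n > 5$ can be verified by sign analysis of the remaining lower-degree polynomial.
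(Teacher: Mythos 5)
Your proposal follows essentially the same route as the paper's proof: the paper also reduces the claim to showing that $g^*_n(a)=p^*_n(a)/q^*_n(a)$ is strictly decreasing on $[2,\lfloor n/2\rfloor]$, obtains the right endpoint from Lemma~\ref{lem:clan:comp:bound}, and checks $p^*_n(2)>q^*_n(2)$ by the same direct computation with $q^*_n(2)=1/(2n-5)$. For the monotonicity step you leave open, the paper's way through the algebra is to write $\frac{g^*_n(a+1)}{g^*_n(a)}=\frac{f_n(a+1)}{f_n(a)}\cdot\frac{(a+1)(2(n-a)-3)}{(n-a)(2a-1)}$ with $f_n(a)=\frac{1}{a(a+1)}+\frac{1}{(n-a)(n-a+1)}-\frac{1}{n(n-1)}$, and then, rather than factoring out $(n-2a)$, to show that the cleared-denominator quantity $\beta_n(a)=f_n(a)(n-a)(2a-1)-f_n(a+1)(a+1)(2(n-a)-3)$ is itself strictly decreasing in $a$ and nonnegative at $a=\lfloor n/2\rfloor-1$ (with a separate check at $a=2$), which is a trick you may find useful when carrying out your Fact~(I).
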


\begin{proof}
For simplicity, let $b:=n-a$.
Since we have
 $$
p_n^*(2)=\frac{4\Big(\frac{1}{6}+\frac{2}{n(n-1)(n-2)}\Big)}{n-1}>\frac{2}{3(n-1)}\geqslant \frac{1}{2n-5}=q_n^*(2),
 $$
and $p^*_n(\lfloor n/2 \rfloor) < q^*_n(\lfloor n/2 \rfloor)$ by Lemma~\ref{lem:clan:comp:bound}, 
it suffices to prove that
\[
	g_n(a) = \frac{p^*_n(a)}{q^*_n(a)}
\]
is strictly decreasing on $[2, \lfloor n/2 \rfloor]$. To this end, let 
\[
	f_n(a) = \frac{1}{a(a+1)} + \frac{1}{b(b+1)} - \frac{1}{n(n-1)}.
\]
 From the definition of $g_n(a)$ and Theorem~\ref{thm:prob:clan}, we have
\[
	\frac{g_{n}(a+1)}{g_n(a)} = \frac{f_{n}(a+1)}{f_n(a)} \frac{(a+1)(2b-3)}{b(2a-1)},
\]
which is \tw{less than} $1$ for $2 \leqslant a \leqslant \lfloor n/2 \rfloor - 1$ if and only if
\begin{equation}
\label{eq:clan:comp:pf}
\beta_n(a):= f_n(a) b(2a-1)-f_n(a+1)(a+1)(2b-3) >0~~~~\text{for $2 \leqslant a \leqslant \lfloor n/2 \rfloor - 1$}.
\end{equation}

\bigskip
In the rest of the proof, we shall establish Eq.~(\ref{eq:clan:comp:pf}). To begin with, note that
\begin{align}
\label{eq:Delta}
 \begin{split}
\beta_n(a)=\frac{3}{n-1}-\frac{3(2a+1)}{n(n-1)}+\frac{2a^2+an+5a-2n}{a(a+1)(a+2)}\\
+\frac{2a-3n}{(b-1)(b+1)}+\frac{a+2n+3}{(b-1)b(b+1)}.
 \end{split}
\end{align}

This implies
\begin{align*}
\beta_n(2)
&=\frac{3n^4 - 18n^3 - 39n^2 + 342n - 360}{4n(n-1)(n-2)(n-3)}\\
&=\frac{3n^2(n^2-6n-13)+(342n-360)}{4n(n-1)(n-2)(n-3)}
> 0
\end{align*}
for $n\geqslant 6$ because $\beta_6(2)=1/5$, $\beta_7(2)=24/70$ and $n^2-6n-13>0$ for $n\geqslant 8$.
In addition,  we have
\begin{eqnarray*}
\beta_{2t+1}(t) &=& \frac{4t^2+2t-2}{t(t+1)(t+2)}+\frac{-4t+2}{t(t+2)}+\frac{5}{t(t+2)}>0
\end{eqnarray*}
for $t\geqslant 3$ and
\begin{align*}
\beta_{2t+2}(t) &= 
\frac{3}{2t-1}-\frac{3}{2t+2}+\frac{4t^2+3t-4}{t(t+1)(t+2)}-\frac{4t+6}{(t+1)(t+3)}+\frac{(5t+7)}{(t+1)(t+2)(t+3)} \\
&=\frac{9}{(2t-1)(2t+2)}+\frac{6t^2-12}{t(t+1)(t+2)(t+3)}\\
&> 0
\end{align*}
for $t\geqslant 2$. Therefore, we have $\beta_n(\lfloor n/2 \rfloor - 1)\geqslant 0$ for $n\geqslant 6$. 

It remains to show that $\beta_n(a)$ is strictly decreasing, that is, $\beta_n(a)-\beta_n(a+1)> 0$ for $3\leqslant a \leqslant \lfloor n/2 \rfloor-1$. Indeed, by Eqn.~(\ref{eq:Delta}) we have
\begin{eqnarray*}
\beta_n(a)-\beta_n(a+1) &=& \frac{6}{n(n-1)}+\frac{2a^2+2an+8a-6n}{a(a+1)(a+2)(a+3)}
+\frac{2a^2-6an+4n^2-10n-8}{(b-2)(b-1)b(b+1)} \\
&>&  \frac{n^2-7n-8+2a^2}{(b-2)(b-1)b(b+1)}  \\
&>& 0.
\end{eqnarray*}
Here the first inequality follows from $a\geqslant 3$ and $a \leqslant  \lfloor n/2 \rfloor-1 \leqslant (n-1)/2$ implying $3n^2-6an\geqslant 3n$, and the second one from $a\geqslant 3$ and $n\geqslant 6$. This completes the proof.
\hfill $\square$
 \end{proof}
 
We end this section with some correlation results about clan probabilities under the PDA model.

 \begin{theorem}
\label{thm:partition:unroot:pda}
\tw{Let $A_1, \dotsc, A_k$ be $k$ disjoint (nonempty) subsets of $X$, and let $m = |A_1|+\dotsb+|A_k|$. Then we have}
\begin{align*}
	\puu(A_1, \dotsc, A_k) 
		&= \frac{\varphi(n-m+k-1)\prod_{i=1}^k \varphi(|A_i|)}{\varphi(n-1)}.
\end{align*}
\end{theorem}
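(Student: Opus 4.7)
The plan is to adapt the constructive counting argument from the proof of Theorem~\ref{thm:partition:pda} to the unrooted setting. Since $\puu$ is the uniform distribution on $\utsp_X$ with $|\utsp_X| = \varphi(n-1)$ by Eq.~(\ref{eq:unrooted-pda-prob}), it suffices to show that the number of trees in $\utsp_X$ having each of $A_1, \dots, A_k$ as a clan equals $\varphi(n-m+k-1) \prod_{i=1}^k \varphi(|A_i|)$, and then to divide by $\varphi(n-1)$.

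To enumerate these trees, I would describe each by a two-step construction. In the first step, choose an unrooted phylogenetic tree on the leaf set $Y := (X \setminus \bigcup_{i=1}^k A_i) \cup \{x'_1, \dots, x'_k\}$, where $x'_1, \dots, x'_k$ are placeholder symbols outside $X$; since $|Y| = n-m+k$, there are $\varphi(n-m+k-1)$ such trees. In the second step, for each $i$, replace the placeholder leaf $x'_i$ by identifying it with the root of an arbitrarily chosen rooted tree in $\tsp_{A_i}$; this contributes $\prod_{i=1}^k \varphi(|A_i|)$ joint choices. Multiplying the two counts and dividing by $\varphi(n-1)$ would then yield the desired formula.

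The step that I expect to require the most care is verifying that the described recipe is a bijection onto the collection of unrooted trees in $\utsp_X$ containing every $A_i$ as a clan. The forward direction is immediate: after the replacement in step two, the edge originally incident with $x'_i$ becomes the edge inducing the split $A_i \mid X \setminus A_i$, so $A_i$ is a clan of the resulting tree. For the reverse direction, I would use the disjointness of the $A_i$, which implies in particular that the associated splits $A_i \mid X \setminus A_i$ are pairwise distinct and compatible, so the $k$ edges inducing them in a given valid tree $T^*$ are uniquely determined; contracting the $A_i$-side of each such edge to a placeholder leaf $x'_i$ then recovers a unique preimage in $\utsp_Y \times \prod_{i=1}^k \tsp_{A_i}$. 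Once this bijective correspondence is in hand, the theorem follows by a single division.
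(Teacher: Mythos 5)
Your proposal is correct and follows essentially the same route as the paper: reduce to counting the trees in $\utsp_X$ that contain each $A_i$ as a clan, and count them via the two-step placeholder construction giving $\varphi(n-m+k-1)\prod_{i=1}^k\varphi(|A_i|)$, then divide by $|\utsp_X|=\varphi(n-1)$. The only difference is that you spell out the bijectivity of the construction (uniqueness of the edges inducing the splits $A_i\mid X\setminus A_i$ and the contraction inverse), which the paper leaves implicit.
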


\begin{proof}
\rev{Since $\puu(T^*)=1/\varphi(n-1)$ for each tree $T^*$ in $\mathcal{T}_X$, it remains to compute the number of trees that have $A_1, \dotsc, A_k$ as clans is $\varphi(n-m+k-1)\prod_{i=1}^k \varphi(|A_i|)$. To this end, note that such a tree can be constructed in two steps:
\begin{enumerate}
	\item Build an unrooted tree on $\left(X \setminus \bigcup_{i=1}^k A_i \right) \cup\{x_1,\dotsc,x_k\}$, where $x_1, \dotsc, x_k$ are leaves not in $X$ serving as ``placeholders'' used in the second step.
	\item Replace each $x_i$ with a tree in $\mathcal{T}_{A_i}$.
\end{enumerate}
There are $\varphi(n-m+k-1)$ different choices for a tree in the first step, and there are $\prod_{i=1}^k \varphi(a_i)$ different ways to replace $x_1, \dotsc, x_k$ by trees in $\mathcal{T}_{A_1}, \dotsc, \mathcal{T}_{A_k}$. The claim then follows.}
\hfill $\square$
\end{proof}

 \begin{theorem}
\label{thm:cor:unroot:PDA}
Let $A$ and $B$ be two subsets of $X$ with $a\leqslant b$, where $a=|A|$ and $b=|B|$. 
Then we have
\tw{
\begin{equation*}
	\puu(A,B) =
	\begin{cases}
	 \frac{\varphi(b)\varphi(n-b)\varphi(a)\varphi(b-a)}{\varphi(n-1)\varphi(b-1)}, & \text{if $A\subseteq B$,}\\ 
	 \frac{\varphi(a)\varphi(b)\varphi(n-a-b+1)}{\varphi(n-1)}, & \text{if $A$ and $B$ are disjoint,}\\
	 0, & \text{otherwise.}
	\end{cases}
\end{equation*}
}
\end{theorem}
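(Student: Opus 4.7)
The plan is to adapt the enumeration strategy of Theorem~\ref{thm:cor:PDA} (the rooted clade analogue) to clans on unrooted trees. Since $\puu$ is the uniform distribution on $\utsp_X$ by Eq.~(\ref{eq:unrooted-pda-prob}), with each tree having probability $1/\varphi(n-1)$, the task reduces to counting the trees in $\utsp_X$ that realize both $A$ and $B$ as clans and then dividing by $\varphi(n-1)$.

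First I would dispose of the ``otherwise'' case by a compatibility argument: for $A$ and $B$ to be simultaneous clans of some $T^*$, the splits $A\mid X\setminus A$ and $B\mid X\setminus B$ must be pairwise compatible, i.e.\ at least one of the four intersections $A\cap B$, $A\cap(X\setminus B)$, $(X\setminus A)\cap B$, $(X\setminus A)\cap(X\setminus B)$ must be empty. Under the assumption $a\leqslant b$ the admissible configurations collapse to $A\subseteq B$ or $A\cap B=\emptyset$, so otherwise no such tree exists and $\puu(A,B)=0$. The disjoint case then follows immediately by specializing Theorem~\ref{thm:partition:unroot:pda} to $k=2$ and $m=a+b$.

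The substance lies in the containment case $A\subseteq B$. Here I would set up a bijection between the trees $T^*\in\utsp_X$ that contain $B$ as a clan and pairs $(T_1,T_2)\in\tsp_B\times\tsp_{X\setminus B}$: cut $T^*$ at the unique edge inducing the split $B\mid X\setminus B$ and treat the cut endpoint on each side as the root (with the natural convention for the degenerate cases $|B|=1$ or $|X\setminus B|=1$). Under this bijection, $A$ is a clan of $T^*$ if and only if $A$ is a clade of $T_1$, so applying Theorem~\ref{thm:pda-clade}(i) inside $\tsp_B$ enumerates the admissible first components while all $\varphi(n-b)$ choices of second component are admissible; multiplying these two counts and dividing by $\varphi(n-1)$ yields the probability. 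The principal obstacle is verifying this bijection cleanly---ensuring each component produced by the cut is a legitimate rooted phylogenetic tree in the sense of Section~\ref{sec:preliminaries}, and that the degenerate small-$|B|$ and small-$|X\setminus B|$ cases are handled without double-counting or omission.
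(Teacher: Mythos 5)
Your disjoint case (Theorem~\ref{thm:partition:unroot:pda} with $k=2$, $m=a+b$) matches the paper, and your cut-at-the-clan-edge bijection for the containment case is sound: the trees in $\utsp_X$ with $B$ as a clan correspond bijectively to pairs in $\tsp_B\times\tsp_{X\setminus B}$, and $A\subsetneq B$ is a clan of $T^*$ exactly when $A$ is a clade of the $\tsp_B$-component. But carry the count to the end: Theorem~\ref{thm:pda-clade}(i) gives $\varphi(a)\varphi(b-a+1)$ admissible first components, so your argument yields
\[
\puu(A,B)=\frac{\varphi(a)\,\varphi(b-a+1)\,\varphi(n-b)}{\varphi(n-1)},
\]
which differs from the displayed formula by the factor $(2b-2a-1)/(2b-3)$ and agrees with it only when $a=1$. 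A direct check settles which is right: for $n=5$, $A=\{1,2\}$, $B=\{1,2,3\}$, exactly one of the $\varphi(4)=15$ unrooted trees (the caterpillar with cherries $\{1,2\}$ and $\{4,5\}$) contains both clans, so $\puu(A,B)=1/15$, which matches your count, whereas the displayed formula gives $1/5$. The paper's one-line proof (``apply Theorem~\ref{thm:prob:clan} twice'') commits precisely the off-by-one your bijection avoids: conditional on $B$ being a clan, the $B$-side together with its attachment point is a uniform rooted tree on $b$ leaves, so the second factor should be $q_b(a)=\varphi(a)\varphi(b-a+1)/\varphi(b)$ and not $q^*_b(a)=\varphi(a)\varphi(b-a)/\varphi(b-1)$. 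Your method is the right one, but as a proof of the statement as printed it cannot close; you should report the corrected formula rather than force agreement.

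Your ``otherwise'' step has a genuine gap of its own. Two splits $A|X\setminus A$ and $B|X\setminus B$ are compatible iff one of the four intersections you list is empty, and the hypothesis $a\leqslant b$ does not eliminate the fourth possibility $(X\setminus A)\cap(X\setminus B)=\emptyset$, i.e.\ $A\cup B=X$, which can hold with $A\cap B\neq\emptyset$ and neither set contained in the other. For $n=5$, $A=\{1,2,3\}$, $B=\{3,4,5\}$, the same caterpillar as above contains both as clans, so $\puu(A,B)=1/15>0$ in a configuration your argument (and the theorem) assigns to the zero branch. By complementation ($A$ is a clan iff $X\setminus A$ is) this case reduces to the disjoint case applied to $X\setminus A$ and $X\setminus B$, giving $\varphi(n-a)\varphi(n-b)\varphi(a+b-n+1)/\varphi(n-1)$. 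The trichotomy that is exhaustive for clades in Theorem~\ref{thm:cor:PDA} is therefore not exhaustive for clans, and a correct proof must treat $A\cup B=X$ as a separate positive case.
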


\begin{proof}
The first case follows by applying Theorem~\ref{thm:prob:clan} twice; the second case follows from Theorem~\ref{thm:partition:unroot:pda}.
\hfill $\square$
\end{proof}

\begin{corollary}
Let $A$ and $B$ be two compatible subsets of $X$. Then we have
$$
\puu(A,B)\geqslant \puu(A)\puu(B).
$$ 
\end{corollary}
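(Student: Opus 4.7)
The plan is to mimic the proof of Theorem~\ref{thm:positive-correlation}: use the explicit formulae in Theorem~\ref{thm:cor:unroot:PDA} to expand $\puu(A,B)$, use $\puu(A) = q_n^*(a) = \nbt(a)\nbt(n-a)/\nbt(n-1)$ from Theorem~\ref{thm:prob:clan} to expand $\puu(A)\puu(B)$, and in each of the two compatibility subcases reduce the desired inequality to an instance of Lemma~\ref{lem:tree:ineq}. Set $a=|A|$ and $b=|B|$ and assume without loss of generality $1\leqslant a \leqslant b\leqslant n-1$.

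First I would handle the case in which $A$ and $B$ are disjoint. Dividing the formulae for $\puu(A,B)$ and $\puu(A)\puu(B)$ and cancelling $\nbt(a)\nbt(b)$, the desired inequality becomes
\[
\nbt(n-a-b+1)\,\nbt(n-1)\;\geqslant\;\nbt(n-a)\,\nbt(n-b).
\]
Since $a,b\geqslant 1$ and $a+b\leqslant n$, the four arguments satisfy $n-a-b+1\leqslant n-b\leqslant n-a\leqslant n-1$ and $(n-a-b+1)+(n-1)=(n-b)+(n-a)$, so the second assertion of Lemma~\ref{lem:tree:ineq} gives exactly this inequality.

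Next I would treat the case $A\subseteq B$. After cancelling the common factor $\nbt(a)\nbt(b)\nbt(n-b)/\nbt(n-1)$ from both sides, the inequality reduces to
\[
\nbt(b-a)\,\nbt(n-1)\;\geqslant\;\nbt(b-1)\,\nbt(n-a).
\]
The four numbers $b-a,\,b-1,\,n-a,\,n-1$ again satisfy $(b-a)+(n-1)=(b-1)+(n-a)$, and one checks that $b-a$ is the smallest and $n-1$ the largest of the four (using $1\leqslant a\leqslant b\leqslant n-1$), regardless of how $b-1$ and $n-a$ compare to each other. Thus Lemma~\ref{lem:tree:ineq} again yields the inequality.

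There is no real obstacle here beyond bookkeeping: the only thing to watch is correctly identifying the extreme pair among the four arguments of $\nbt$ in each case so that Lemma~\ref{lem:tree:ineq} can be applied in its stated form. The corollary then follows from Theorem~\ref{thm:cor:unroot:PDA} and Theorem~\ref{thm:prob:clan} by combining the two cases.
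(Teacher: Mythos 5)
Your proposal is correct and follows essentially the same route as the paper's own proof: both subcases are reduced, after the same cancellations in the formulae of Theorem~\ref{thm:cor:unroot:PDA} and Theorem~\ref{thm:prob:clan}, to exactly the two inequalities $\nbt(n-a-b+1)\nbt(n-1)\geqslant\nbt(n-a)\nbt(n-b)$ and $\nbt(b-a)\nbt(n-1)\geqslant\nbt(b-1)\nbt(n-a)$, each an instance of Lemma~\ref{lem:tree:ineq}. No substantive difference from the paper's argument.
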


\begin{proof}
\tw{
Set $a=|A|$ and $b=|B|$. By symmetry we may assume without loss of generality that $a\leqslant b$ holds. Since $A$ and $B$ are compatible, we have either $A\cap B=\emptyset$ or $A\subseteq B$. %one is contained in the others.
}

\tw{To establish the theorem for the first case, note first that $n-a-b+1\leqslant n-b \leqslant n-a \leqslant n-1$ holds. Therefore by Lemma~\ref{lem:tree:ineq}, we have
$$
\nbt(n-a-b+1)\nbt(n-1)\geqslant \nbt(n-a)\nbt(n-a),
$$
 and hence
 $$
\frac{\nbt(a)\nbt(b)\nbt(n-a-b+1)}{\nbt(n-1)}\geqslant 
\Big(\frac{\nbt(b)\nbt(n-b)}{\nbt(n-1)}\Big)\Big(\frac{\nbt(a)\nbt(n-a)}{\nbt(n-1)}\Big).
$$
Together with Theorem~\ref{thm:cor:unroot:PDA}, this shows that the theorem holds for the case $A\cap B=\emptyset$.
}

\tw{
For the second case, note that $b-a\leqslant n-a \leqslant n-1$ and $b-a \leqslant b-1 \leqslant n-1$ hold. Therefore by by Lemma~\ref{lem:tree:ineq}, we have
$$
\nbt(n-1)\nbt(b-a)\geqslant\nbt(b-1)\nbt(n-a).
$$
and hence
$$
\frac{\varphi(b)\varphi(n-b)\varphi(a)\varphi(b-a)\varphi(n-b)}{\varphi(n-1)\varphi(b-1)} 
\geqslant 
\Big(\frac{\nbt(b)\nbt(n-b)}{\nbt(n-1)}\Big)\Big(\frac{\nbt(a)\nbt(n-a)}{\nbt(n-1)}\Big).
$$
Together withTheorem~\ref{thm:cor:unroot:PDA}, this shows that the theorem holds for the case $A\subseteq B$, as required. 
}
\hfill $\square$
\end{proof}

\section{Discussion and concluding remarks}
Clade sizes are an important genealogical feature in the study of phylogenetic and  population genetics. In this paper we present a comparison study between the clade probabilities under the YHK and PDA models,  two null models which are commonly used in evolutionary biology.

Our first main result reveals a common feature, that is, the  clade probability sequences are log-convex under both  models. This implies that compared  with `mid-sized' clades, very `large' clades and very `small' clades are more likely to occur under these two models, and hence provides a theoretical explanation for the empirical result on the PDA model observed by~\citet{PR05}. 
One implication of this  result is that in Bayesian analysis where the two null models are used as prior distribution, the distribution on clades is not uninformative as bias is given to those whose sizes are extreme. Therefore, further considerations or adjustment, such as introducing a Bayes factor to account for the bias on prior clade probabilities, is important  to interpret posterior Bayesian clade supports.  

The second result reveals a `phase transition' type feature when comparing   the sequences of clade probabilities under the two null models. That is, we prove that there exists a critical value $\kappa(n)$ such that the probability that a given clade with size $k$ is contained  in a random tree with $n$ leaves  generated under the YHK model is smaller than that under the PDA model for  $1<k\leqslant \kappa(n)$, and  higher for all $\kappa(n)\leqslant k <n$. 
This implies that typically the trees generated under the YHK model contains relatively more `small' clades than those under the PDA model.  

The above two results are also extended to unrooted trees by considering the probabilities of `clans',  the sets of \tw{taxa} that are all on one side of an edge in an unrooted phylogenetic tree. 
This extension is relevant because in many tree reconstruction approaches, the problem of finding the root is either ignored or left as the last step. Here 
we study the sequences formed by clan probabilities for unrooted trees  \tw{generated by the two null models, and obtain several results similar to those for rooted trees.}

Note that the two models studied here are special instances of the $\beta$-splitting model introduced by~\citet{aldous96a},  a critical branching process in which the YHK model corresponds to $\beta=0$ and the PDA model to $\beta=-1.5$. Therefore, it would be of interest to study clade and clan probabilities under this more general model. In particular, it is interesting to see whether the relationships between two models revealed in this paper also hold for general $\beta$.

\begin{acknowledgements}
We thank Prof. Kwok Pui Choi and Prof. Noah A.
Rosenberg for simulating discussions and useful suggestions.
We would also like to thank two anonymous  referees for their
helpful and constructive comments on the first version of this paper.
% If you'd like to thank anyone, place your comments here
% and remove the percent signs.
\end{acknowledgements}

% BibTeX users please use one of
\bibliographystyle{spbasic}      % basic style, author-year citations
{\footnotesize
\bibliography{pdacladebib.bib}   % name your BibTeX data base
}
% Non-BibTeX users please use
% \begin{thebibliography}{}
% %
% % and use \bibitem to create references. Consult the Instructions
% % for authors for reference list style.
% %
% \bibitem{RefJ}
% % Format for Journal Reference
% Author, Article title, Journal, Volume, page numbers (year)
% % Format for books
% \bibitem{RefB}
% Author, Book title, page numbers. Publisher, place (year)
% % etc
% \end{thebibliography}

\end{document}